\documentclass[11pt]{article}
\usepackage{graphicx}
\usepackage[margin=1.1in]{geometry}
\usepackage{bm}
\usepackage{subfig}
\usepackage{natbib}
\usepackage{amsmath,amssymb,amsfonts,amsthm}

\newcommand{\N}{\ensuremath{\mathbb{N}}}

\newcommand{\PP}{\ensuremath{\mathbb{P}}}
\newcommand{\E}{\ensuremath{\mathbb{E}}}

\newcommand{\A}{\ensuremath{h(p^{\ast})}}
\newtheorem{theorem}{Theorem}
\newtheorem{lemma}{Lemma}
\newtheorem{corollary}{Corollary}
\newtheorem{algorithm}{Algorithm}
\newtheorem{condition}{Condition}

\theoremstyle{remark}

\newtheorem{example}{Example}

\newcommand{\Supp}[1]{{Supporting Information}}

\title{MMCTest -- A Safe Algorithm for Implementing Multiple Monte Carlo Tests}
\author{Axel Gandy and Georg Hahn
  \\Department of Mathematics, Imperial College London}
\date{}

\begin{document}
\maketitle

\begin{abstract}
  Consider testing multiple hypotheses using tests that can only be
  evaluated by simulation, such as permutation tests or bootstrap
  tests.  This article introduces \texttt{MMCTest}, a sequential
  algorithm which gives, with arbitrarily high probability,
  the same classification as a specific multiple testing procedure applied to ideal p-values.
  The method can be used with a class of multiple testing procedures
  which includes the Benjamini \& Hochberg False Discovery Rate
  (FDR) procedure and the Bonferroni correction
  controlling the Familywise Error Rate.
  One of the key features of the algorithm is that it stops
  sampling for all the hypotheses which can already be decided
  as being rejected or non-rejected.
  \texttt{MMCTest} can be interrupted at any stage and then
  returns three sets of hypotheses: the rejected,
  the non-rejected and the undecided hypotheses.
  A simulation study motivated by actual biological
  data shows that \texttt{MMCTest} is usable in practice and that, despite
  the additional guarantee, it can be computationally more efficient
  than other methods.
\end{abstract}

\textit{Key words:}
Benjamini Hochberg, Bonferroni correction,
bootstrap, false discovery rate, multiple comparisons,
resampling, sequential algorithm

\section{Introduction}
\label{section_introduction}
Consider multiple hypotheses to be tested 
by individual tests
using a procedure which corrects for  multiplicity, such as the
\cite{Benjamini1995CFD} procedure or the \cite{Bonferroni1936} correction.
The procedure of \cite{Benjamini1995CFD}
has recently received much attention, resulting in various
generalizations \citep{Finner2012,Farcomeni2009,Farcomeni2007,Meinshausen2006}.

Standard procedures require knowledge of the ideal p-values of all tests.
We consider the case where p-values are not known exactly and can only be
computed by simulation.  For example, this occurs for bootstrap
or permutation tests. We will call such tests Monte Carlo tests.
Safe implementations of individual Monte Carlo tests and computation of their
power have been proposed \citep{gandy09:SeqImplMC,GandyRubinDelanchy2013}.
Recent studies involving Monte Carlo tests use a
variety of data sources such as data from a genome data archive
\citep{Pekowska2010}, brain activity data \citep{LageCastellanos2010}
and microarray data \citep{JiaoZhang2010}.
As an example, we consider microarray data of gene expressions for yeast
chemostat cultivations \citep{Knijnenburg2009} in the present article.

This article introduces \texttt{MMCTest}, an algorithm to implement
the multiplicity correction for multiple Monte Carlo tests.
The algorithm gives, with pre-specified probability, the same
classification (rejected and non-rejected hypotheses) as the
classification based on the ideal p-values. For
permutation tests, the ideal p-values can in principle be
obtained by running through all permutations. For bootstrap tests, the
ideal p-value is the probability that a bootstrapped test statistic is
at least as extreme as the observed test statistic.

The motivation for trying to achieve the same classification as the
one obtained with the ideal p-values is mainly repeatability and
objectivity of the results, which \cite{Gleser1996} called
\textit{first law of applied statistics}: ``Two individuals using the
same statistical method on the same data should arrive at the same
conclusion.''  Our algorithm achieves this up to a guaranteed
pre-specified error probability. Another reason for comparing to the
ideal p-values is that all the
theoretical results of the multiple testing procedure based on the
ideal p-values still hold (again up to the guaranteed error
probability).

Our proposed algorithm is sequential: it starts with all hypotheses being
unclassified and then takes samples and classifies
hypotheses until all but a certain
number of hypotheses have been classified or until a certain
effort is reached.
The proposed algorithm can be stopped earlier while having
the same guarantee on the probability of misclassifications.
When stopped before all hypotheses have been
classified, the algorithm returns three sets: the rejected, the
non-rejected and the not yet classified hypotheses.

A literature review detailing existing approaches
used to classify multiple hypotheses without knowledge of the ideal p-values
is presented in Section \ref{section_literature}.

The basic \texttt{MMCTest} algorithm is described in Section
\ref{section_algorithm}. Moreover, Section \ref{section_algorithm}
states conditions which 
bound the probability of
classification errors and which guarantee the convergence
of the testing result of \texttt{MMCTest} to the classification  based on the ideal p-values.
The multiple testing procedure of \cite{Benjamini1995CFD}
and the \cite{Bonferroni1936} correction
satisfy these conditions (Appendix \ref{appendix_properties}).

In Section \ref{section_empirical_comparison},
we first present an application of \texttt{MMCTest}
motivated by real biological data, given by a microarray dataset of
gene expressions for yeast chemostat cultivations
\citep{Knijnenburg2009}.
Afterwards, we conduct simulation studies motivated by this real data to
compare the performance of a naive approach and of \texttt{MCFDR} to
\texttt{MMCTest}. Furthermore, we investigate the dependance of \texttt{MMCTest}
on certain parameters.

We conclude with a discussion in Section \ref{section_discussion}.
All proofs can be found in the Appendix.
The \texttt{MMCTest} algorithm is implemented in an R-package
(\texttt{simctest}, available on CRAN, The Comprehensive R Archive Network).
The \Supp{}
includes an evaluation of a second dataset and
further simulation studies with different testing thresholds.

\section{Literature review}
\label{section_literature}
Several  algorithms  in the literature are related
to our approach in that they aim to stop drawing
samples for certain hypotheses. Some of them aim to give guarantees
on their result, but these guarantees are usually much weaker than our guarantee.

Superficially, our algorithm is close to the algorithm proposed
by \cite{GuoPedadda2008}. Both algorithms maintain confidence
intervals for the p-values corresponding to each hypothesis, and
stop generating samples for hypotheses for which a decision can
be reached. For this both algorithms rely on  the monotonicity
property of the \cite{Benjamini1995CFD} procedure \citep{TamhaneLiu2008}.

However, there are crucial differences. These mainly come from
the different aim of the algorithms: \cite{GuoPedadda2008} aim to
reduce the effort compared to the naive approach with a fixed
number of samples per hypothesis. We aim to give the same
classification as the classification using the ideal p-values.  As
a consequence, their algorithm imposes an upper bound on the
number of samples generated per hypothesis, whereas our algorithm
is open-ended.
Their algorithm does not aim to ensure repeatability,
whereas we aim to do so.
To be able to do this we judiciously control the joint
coverage probability of the intervals.

To be specific, the main results in
\citet[Proposition 1, Theorem 1]{GuoPedadda2008}
are related to Lemma \ref{lemma_monotonicity_sets}
in the present article, with the difference that \cite{GuoPedadda2008}
compare the classification to the naive approach with a fixed number
of samples, whereas we compare the classification to the one based on
the ideal p-values.  Furthermore, the article of \cite{GuoPedadda2008}
has no equivalence to our Theorem \ref{theorem_convergence} in which
we prove that the classification returned by our algorithm converges
to the one based on the ideal p-values.

\cite{JiangSalzman2012} present
an early stopping procedure with a bound on
its computational savings.
As \cite{GuoPedadda2008}, \cite{JiangSalzman2012}
aim at designing a procedure which gives the same result as the naive approach 
with a fixed number of samples. 
Moreover, the authors show that
their procedure only controls the False Discovery Rate (FDR)
\citep{Benjamini1995CFD} up to an error term.

The ad-hoc method of \cite{Wieringen2008} stops
generating samples for hypotheses 
for which a  lower confidence level exceeds a pre-specified
threshold (leading to a non-rejection).
No early stopping for rejections is proposed.
Being an ad-hoc method for a specific application,
no explicit theoretical results are given.

The algorithm \texttt{MCFDR} of \cite{Sandve2011} is a modification of
the algorithm of \cite{BesagClifford1991}, which, for a single
hypothesis, stops drawing further samples when a fixed number of
exceedances has been observed.  The main idea of \texttt{MCFDR} is to
use the criterion of \cite{BesagClifford1991} to obtain quick
non-rejections and to stop the entire algorithm once all remaining
hypotheses are rejected based on their current estimated 
p-values.  Although \texttt{MCFDR} gives quick results, the algorithm
does not give any guarantees on how its result relates to the
result obtained with the ideal p-values.

In contrast to the above approaches, the focus of our algorithm lies
on the computation of the same classification as the one obtained if
the ideal p-value of each test had been available.

There are other related methods which are not necessarily designed to
apply the \cite{Benjamini1995CFD} procedure
or the \cite{Bonferroni1936} correction: The method proposed by
\cite{Knijnenburg2009} uses ordinary permutation p-values if
sufficiently many exceedances can be observed; otherwise, the authors
approximate the p-values using a fitted extreme value distribution.
The aim is to efficiently compute an estimate of all p-values, without
giving any theoretical guarantees.  Several specialized
resampling-based testing procedures for various sampling methods and
various statistics can be found in \cite{WestfallYoung1993}.  All
above methods do not try to take the (unknown) dependance between the
test statistics into account.  Using permutation methods this can be
attempted \citep{Meinshausen2006,WestfallTroendle2008}.

\section{Description of the algorithm}
\label{section_algorithm}
\subsection{Basic algorithm}
\label{subsection_basic_algorithm}
Consider testing $m$ null hypotheses $H_{01}, \ldots, H_{0m}$ having
corresponding test statistics $T_1, \ldots, T_m$ and observed values
$t_1,\ldots,t_m$. A large value of
$t_i$ shall indicate evidence against $H_{0i}$.
Moreover, let $p_i^\ast$ denote the ideal p-value
corresponding to the  hypothesis $H_{0i}$.
We assume that $p^\ast=(p_1^\ast,\ldots,p_m^\ast)$ are not available analytically, but
have to be obtained through simulations.

We assume that for every hypothesis
$H_{0i}$, where $i \in \{ 1,\ldots,m \}$,
we can obtain independent samples from the test statistic
$T_i$ under the null hypothesis.
We will denote these by $T_{ij}$,
and the corresponding exceedance indicators  will be denoted
by $X_{ij} = {\bf 1}({T_{ij} \geq t_i})$, $j \in \N$,
where $\bf 1$ is the indicator function.
In the case of a permutation test, computing  $T_{ij}$ involves generating permutations without replacement.

Suppose that $h: [0,1]^m \rightarrow \mathcal{P}(\{ 1,\ldots,m \})$
takes a vector of p-values and returns the set of indices of
hypotheses to be rejected, where $\mathcal{P}$ denotes the power set.
We will call any such function a \emph{multiple testing procedure}.
Ultimately, we are interested in obtaining $\A$, which we
refer to as the ideal set of rejections.

Following \cite{TamhaneLiu2008},
we call a multiple testing procedure $h$
\emph{monotonic} if $h(p) \supseteq h(q)$ $\forall p \leq q$,
where $p, q \in [0,1]^m$, i.e.\ if lower p-values lead to
more rejections.

The following generic algorithm is designed
for monotonic multiple testing procedures.
It iteratively controls the set of hypotheses
for which further samples need to be drawn by
refining confidence intervals for every $p_i^\ast$ through Monte Carlo sampling.
At iteration $n$, the confidence interval for the p-value $p_i^\ast$
is denoted by $I_i^n$.
The upper confidence limit of a confidence interval $I_i^n$ is
denoted by $\max I_i^n$ and the lower confidence limit
is denoted by $\min I_i^n$.

The following  variables and functions control the behavior of the
algorithm.
The variable $\Delta$ controls
how many additional samples are drawn in each iteration. It is
increased geometrically by a constant $a \geq 1$ in each step
of the algorithm, starting at $\Delta_0 \geq 1$.
In the examples of this article we use $a=1.25$ and
$\Delta_0=10$. Two vectors $S,k \in \N_0^m$ keep track of counts.

The function $f(S,k,\Delta)$
computes a confidence interval for the ideal p-value of a hypothesis
based on the number of exceedances $S$ and the number of  samples $k$ drawn
for this hypothesis. The dependance on 
the current value of $\Delta$ is needed to be able to
guarantee a joint coverage probability of all confidence intervals
produced in the algorithm. 
For simplicity, we will assume that $f$ returns  closed confidence intervals.
In Appendix \ref{appendix_clopper_pearson} we give an example
for such an $f$ which computes \cite{ClopperPearson1934} confidence
intervals and uses a spending sequence to guarantee an overall
coverage probability.

The algorithm runs until at most $c \geq 0$
hypotheses are classified or until the total number of
samples drawn reaches a pre-specified  limit $k_{\max}$.
The following pseudo-code uses $c=0$ and $k_{\max}=\infty$,
thereby computing a classification of all hypotheses.

In the remainder of this article, $| \cdot |$ denotes the number of elements
in a finite set and the length of an interval. Moreover,
$\left\Vert \cdot \right\Vert$ denotes the Euclidean norm of a vector.

\begin{algorithm}
\label{algorithm_basic}
$\textnormal{\bf [MMCTest}((X_{ij})_{ij},f,c=0,k_{\max}=\infty,\Delta_0=10,a=1.25)\textnormal{\bf ]}$
\begin{itemize}
  \item[] $n:=0$, $\Delta := \Delta_0$, $\underline{A}_0 := \emptyset$, $\overline{A}_0:=\{ 1,\ldots, m \}$.
  \item[] $I_i^0 := [0,1]$, $S_i:=k_i:=0$ $\forall i = 1,\ldots,m$.
  \item[] While $\left( |\overline{A}_n \setminus \underline{A}_n|>c
  \text{ and } \sum_{i=1}^m k_i \leq k_{\max} \right)$
  \begin{itemize}
    \item[] $n := n+1$.
    \item[] $\Delta := \lfloor a \Delta \rfloor$.
    \item[] For all $i \in \overline{A}_{n-1} \setminus \underline{A}_{n-1}$:
    \begin{itemize}
      \item[] $S_i := S_i + \sum_{j=k_i+1}^{k_i+\Delta} X_{ij}$.
      \item[] $k_i := k_i + \Delta$.
      \item[] $I_i^n := f(S_i,k_i,\Delta) \cap I_i^{n-1}$.
    \end{itemize}
    \item[] For all $i \notin \overline{A}_{n-1} \setminus \underline{A}_{n-1}$: $I_i^n:=I_i^{n-1}$.
    \item[] Set $\underline{A}_n := h( (\max I_i^n)_{i=1,\ldots,m})$, $\overline{A}_n := h( (\min I_i^n)_{i=1,\ldots,m})$.
  \end{itemize}
  \item[] Return $(\underline{A}_n, \overline{A}_n)$.
\end{itemize}
\end{algorithm}

The algorithm works as follows: The number of additional
samples $\Delta$ drawn in every step is increased
geometrically. The total number of samples drawn up to iteration $n$
for a hypothesis $i \in \{1,\ldots,m\}$ is stored in $k_i$
and the total number of observed exceedances is stored in $S_i$.
For all  hypotheses which are still under consideration, i.e.\ those in 
$\overline{A}_{n-1} \setminus \underline{A}_{n-1}$, an additional
batch of $\Delta$ samples is drawn and new confidence intervals are
computed.
The confidence intervals remain unchanged for the other hypotheses.
New classifications are then computed based
on the updated upper and lower confidence limits.

The confidence intervals $I_i^n$ computed
in Algorithm \ref{algorithm_basic} are nested by construction.

\begin{figure}[tb]
  \centering
  \includegraphics[width=\textwidth]{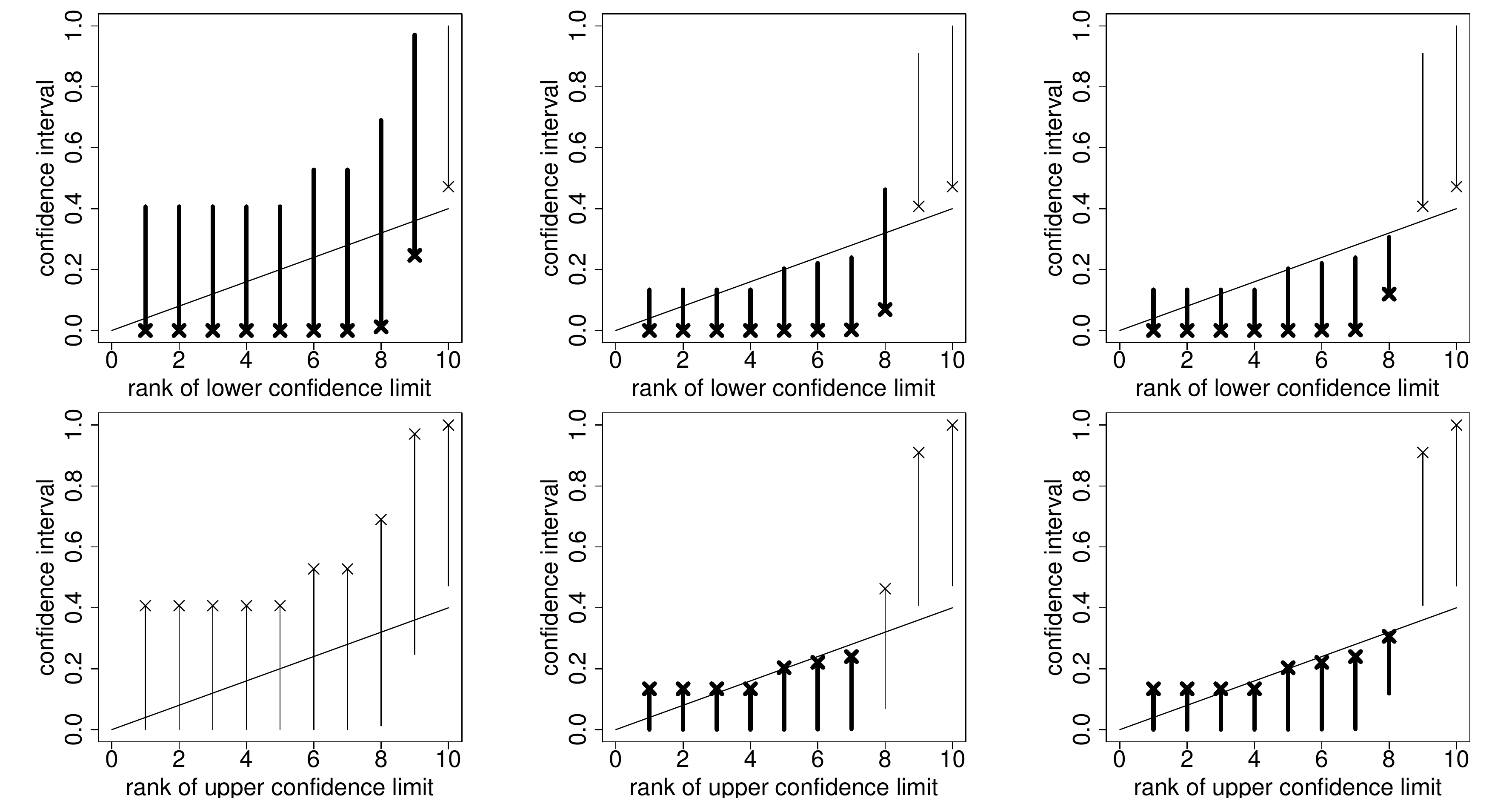}
  \caption{
Example run of \texttt{MMCTest} on $m=10$ hypotheses using  the
Benjamini-Hochberg procedure $h$:  after the second
iteration (left), after a few additional iterations
(center) and after the last iteration (right).
Bold confidence intervals denote elements of $\overline{A}_n$ in the
upper row and elements of $\underline{A}_n$ in the lower row. The lower (upper) confidence limits used to compute
$\overline{A}_n$ ($\underline{A}_n$) are marked with a cross.}
  \label{figure_visualisation}
\end{figure}

\begin{example}
\label{example_visualization} 
An example run of \texttt{MMCTest} (with $m=10$ hypotheses
and $c=0$) is shown in Figure \ref{figure_visualisation}.
We use the \cite{Benjamini1995CFD} FDR controlling procedure
(see Appendix \ref{appendix_benjamini_hochberg})
with threshold $\alpha=0.4$ as the multiple testing function $h$.
The function $f$ given in Appendix \ref{appendix_clopper_pearson} is used
to compute confidence intervals.
Columns show different iterations, the upper row
shows the computation of $\overline{A}_n$, the lower row
shows the computation of $\underline{A}_n$.
The indices contained in $\overline{A}_n$ and $\underline{A}_n$
are visualized with bold confidence intervals.
Additionally, the lower (upper) confidence limits used to compute
$\overline{A}_n$ ($\underline{A}_n$) are marked with a cross.
Only the lower (upper) end of the confidence interval matters
for the computation of $\overline{A}_n$ ($\underline{A}_n$),
thus the hypotheses are ordered by their lower
(upper) confidence limit in the upper (lower) row.
In this example this turns out to be the same ordering.
After the second iteration (left column),
\texttt{MMCTest} has already classified the last hypothesis as being non-rejected
as the lower confidence limit of its p-value lies above the line
connecting the points $(0,0)$ and $(m,\alpha)$ which we call the
Benjamini-Hochberg line.
All other hypotheses are still undecided and thus their confidence intervals
will be refined. After a few additional iterations (middle column), the seven smallest
values can be classified as rejected as the upper confidence limit of the seventh value
is below the line. Likewise, the confidence interval of the ninth value has now been shrunk
to be entirely above the line which classifies this value as non-rejected.
The eighth p-value is still unclassified as its confidence interval overlaps with the line.
After refining the confidence interval further, the algorithm stops in the
situation depicted in the right column with a complete classification
($\overline{A}_n=\underline{A}_n$).
\end{example}

The monotonicity of $h$ implies immediately
that the sequence of sets $\underline{A}_n$
is increasing, that the sequence of sets $\overline{A}_n$
is decreasing and,
on an additional assumption,
that each $\underline{A}_n$ ($\overline{A}_n$) is a subset (superset)
of the ideal set of rejections $\A$.

\begin{lemma}
\label{lemma_monotonicity_sets}
Assume that $h$ is monotonic.
\begin{enumerate}
  \item $(\underline{A}_n)_{n \in \N} \nearrow$ and $(\overline{A}_n)_{n \in \N} \searrow$.
  \item If $p_i^\ast \in I_i^n$ $\forall i,n$, then
$\underline{A}_n \subseteq \A \subseteq \overline{A}_n$ $\forall n \in \N$.
\end{enumerate}
\end{lemma}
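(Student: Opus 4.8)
The plan is to deduce both statements directly from the nestedness of the confidence intervals together with the monotonicity of $h$; the argument is essentially bookkeeping.

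First I would record the structural fact, already noted after Algorithm~\ref{algorithm_basic}, that the intervals are nested: $I_i^n \subseteq I_i^{n-1}$ for every $i\in\{1,\dots,m\}$ and every $n\ge 1$. For $i \in \overline{A}_{n-1}\setminus\underline{A}_{n-1}$ this holds because $I_i^n = f(S_i,k_i,\Delta)\cap I_i^{n-1}$ is an intersection with $I_i^{n-1}$, and for the remaining indices $I_i^n = I_i^{n-1}$. Consequently the upper confidence limits are non-increasing in $n$ and the lower confidence limits are non-decreasing in $n$: $\max I_i^n \le \max I_i^{n-1}$ and $\min I_i^n \ge \min I_i^{n-1}$ for all $i$. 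Writing the endpoint vectors $\overline{q}^{\,n} := (\max I_i^n)_{i=1,\dots,m}$ and $\underline{q}^{\,n} := (\min I_i^n)_{i=1,\dots,m}$, this says $\overline{q}^{\,n} \le \overline{q}^{\,n-1}$ and $\underline{q}^{\,n} \ge \underline{q}^{\,n-1}$ componentwise.

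For part 1, I would then just invoke the definition of monotonicity of $h$. Since $\underline{A}_n = h(\overline{q}^{\,n})$ and $\overline{q}^{\,n} \le \overline{q}^{\,n-1}$, monotonicity gives $h(\overline{q}^{\,n}) \supseteq h(\overline{q}^{\,n-1})$, i.e.\ $\underline{A}_{n-1}\subseteq\underline{A}_n$; hence $(\underline{A}_n)_{n\in\N}\nearrow$. Symmetrically, $\overline{A}_n = h(\underline{q}^{\,n})$ together with $\underline{q}^{\,n}\ge\underline{q}^{\,n-1}$ gives $h(\underline{q}^{\,n})\subseteq h(\underline{q}^{\,n-1})$, i.e.\ $\overline{A}_n\subseteq\overline{A}_{n-1}$; hence $(\overline{A}_n)_{n\in\N}\searrow$. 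For part 2, the hypothesis $p_i^\ast\in I_i^n$ for all $i,n$ means $\min I_i^n \le p_i^\ast \le \max I_i^n$, i.e.\ $\underline{q}^{\,n}\le p^\ast\le\overline{q}^{\,n}$ componentwise; applying monotonicity of $h$ twice yields $h(\underline{q}^{\,n})\supseteq h(p^\ast)\supseteq h(\overline{q}^{\,n})$, which is exactly $\overline{A}_n\supseteq\A\supseteq\underline{A}_n$.

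There is no genuine obstacle here. The only points needing a word of care are that $\max I_i^n$ and $\min I_i^n$ are well defined --- each $I_i^n$ is a nonempty closed interval, since $I_i^0=[0,1]$ and, on the event of part 2, all subsequent intersections contain $p_i^\ast$ --- and that the componentwise inequalities between endpoint vectors are oriented correctly before the definition ``$h(p)\supseteq h(q)$ for $p\le q$'' is applied. One could also phrase part 1 as a trivial induction on $n$, but it is not needed since the endpoint monotonicity holds for each consecutive pair directly.
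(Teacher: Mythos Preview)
Your proposal is correct and follows essentially the same route as the paper's proof: both use the nestedness of the $I_i^n$ to get monotone endpoint sequences, then apply monotonicity of $h$ directly to obtain each inclusion. The only cosmetic differences are your explicit notation $\overline{q}^{\,n},\underline{q}^{\,n}$ and your extra remark about nonemptiness of the intervals, which the paper does not discuss.
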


\subsection{Conditions and main results}
\label{subsection_conditions}
In this section we show that under certain conditions the
classification of \texttt{MMCTest} is correct with high probability,
meaning that all classifications are identical to the
classifications based on the ideal p-values. Furthermore, we show
that all hypotheses will be classified.

The first condition pertains to  the multiple testing procedure $h$.
Besides asking for monotonicity, it ensures
that lowering the p-value of a rejected hypothesis or
increasing the p-value of a non-rejected hypothesis
does not change the result of $h$.

\begin{condition}
\label{condition_invariance}
\begin{enumerate}
  \item $h$ is monotonic.
  \item Let $p,q \in [0,1]^m$.
If $q_i \leq p_i$ $\forall i \in h(p)$ and $q_i \geq p_i$ $\forall i \notin h(p)$,
then $h(p) = h(q)$.
\end{enumerate}
\end{condition}

The second condition requires the function $f$
to produce confidence intervals whose length goes uniformly to 0 as
more samples are drawn.

\begin{condition}
\label{condition_intervals}
$|f(S,k,\Delta)|$ converges uniformly to $0$ as $k \to \infty$, i.e.\
$\forall \epsilon>0$ $\exists k_0 \in \N$ such that $\forall k \geq k_0$,
$\forall S \in \{ 0,\ldots,k \}$ and $\forall \Delta \in \N$,
we have $|f(S,k,\Delta)| < \epsilon$.
\end{condition}

The main theorem guaranteeing convergence is as follows:

\begin{theorem}
\label{theorem_convergence}
Suppose Conditions \ref{condition_invariance} and
\ref{condition_intervals} hold and suppose
that there exists $\delta>0$ such that $p \in [0,1]^m$ and
$\left\Vert p-p^\ast \right\Vert <\delta$ imply $h(p)=h(p^\ast)$.
Then, on the event $\{ p_i^\ast \in I^n_i ~\forall i,n \}$,
both sequences $(\underline{A}_n)_{n \in \N}$ and
$(\overline{A}_n)_{n \in \N}$ converge to $\A$,
i.e.\ there exists $n_0 \in \N$ such that
$\underline{A}_n = \overline{A}_n = \A$ $\forall n \geq n_0$.
\end{theorem}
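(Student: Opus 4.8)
The plan is to work throughout on the event $E=\{p_i^\ast\in I_i^n\ \forall i,n\}$ and to lean on Lemma~\ref{lemma_monotonicity_sets}, which already gives that $(\underline{A}_n)$ increases, $(\overline{A}_n)$ decreases, and $\underline{A}_n\subseteq\A\subseteq\overline{A}_n$ for every $n$. Since $\overline{A}_n\setminus\underline{A}_n$ is a decreasing sequence of subsets of the finite set $\{1,\dots,m\}$, it stabilises after finitely many iterations to a set $G:=\bigcap_n(\overline{A}_n\setminus\underline{A}_n)$, and from that point on the algorithm draws fresh samples only for the hypotheses in $G$. The theorem then reduces to showing $G=\emptyset$: once $G=\emptyset$ we have $\underline{A}_n=\overline{A}_n$ for large $n$, and the sandwich $\underline{A}_n\subseteq\A\subseteq\overline{A}_n$ identifies the common value as $\A$.

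Next I would record the limiting geometry of the confidence intervals. For $i\in G$ the counter $k_i$ tends to infinity, so $|I_i^n|\le|f(S_i,k_i,\Delta)|\to0$ by Condition~\ref{condition_intervals}; since $p_i^\ast\in I_i^n$, both $\max I_i^n$ and $\min I_i^n$ converge to $p_i^\ast$. For $i\notin G$ sampling eventually stops, so $I_i^n$ equals a fixed interval $[\ell_i,u_i]\ni p_i^\ast$ for all large $n$. Combining Lemma~\ref{lemma_monotonicity_sets} with $\overline{A}_n\setminus\underline{A}_n=G$, I would also pin down the stabilised sets exactly: $\underline{A}_n=\A\setminus G$ and $\overline{A}_n=\A\cup G$ for all large $n$, and in particular $\underline{A}_n\cap G=\emptyset$.

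The heart of the proof is to show that, for all large $n$, $h\bigl((\max I_i^n)_i\bigr)=\A$; a symmetric argument then gives $h\bigl((\min I_i^n)_i\bigr)=\A$, and since these two sets are $\underline{A}_n$ and $\overline{A}_n$ this forces $G=\emptyset$. Write $q^n=(\max I_i^n)_i$, so that $q_i^n\ge p_i^\ast$ everywhere, $q_i^n\to p_i^\ast$ for $i\in G$, and $q_i^n=u_i$ for $i\notin G$ once $n$ is large. I would connect $q^n$ to $p^\ast$ through two uses of Condition~\ref{condition_invariance}(2). Introduce $w^n$, the vector equal to $p_i^\ast$ for $i\notin G$ and to $q_i^n$ for $i\in G$; it differs from $p^\ast$ only on the $|G|$ coordinates in $G$, where $w_i^n\to p_i^\ast$, so $\lVert w^n-p^\ast\rVert<\delta$ for large $n$ and the stability hypothesis gives $h(w^n)=\A$. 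Now move from $q^n$ down to an intermediate vector $v^n$ by lowering the coordinates indexed by $\A\setminus G$ (the classified rejections) from $u_i$ to $p_i^\ast$: these coordinates lie in $h(q^n)=\underline{A}_n=\A\setminus G$, so Condition~\ref{condition_invariance}(2) gives $h(v^n)=h(q^n)$. Finally $v^n$ and $w^n$ agree except on the coordinates indexed by the classified non-rejections $\{1,\dots,m\}\setminus(G\cup\A)$, where $v_i^n=u_i\ge p_i^\ast=w_i^n$; since this index set is disjoint from $h(w^n)=\A$, raising those coordinates from $w^n$ up to $v^n$ is again an invariance move, so $h(v^n)=h(w^n)=\A$. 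Chaining, $h(q^n)=h(v^n)=\A$.

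I expect the main obstacle to be exactly this last chain, and specifically the phenomenon that the intervals of already-classified hypotheses need \emph{not} shrink, so that the vectors of upper and lower confidence limits converge, but not to $p^\ast$; the stability hypothesis cannot be applied to them directly, which is why the argument must be routed through Condition~\ref{condition_invariance}(2). The delicate points will be (i) choosing the intermediate vectors so that every coordinate adjustment is on the ``safe side'' of the decision — lowering a rejected coordinate or raising a non-rejected one — so that Condition~\ref{condition_invariance}(2) applies and only a vanishing perturbation supported on $G$ is left for the stability hypothesis; and (ii) verifying at each step that the coordinates being moved really are in, or really are out of, the relevant value of $h$, which is where the identifications $\underline{A}_n=\A\setminus G$ and $\overline{A}_n=\A\cup G$ for large $n$ are essential.
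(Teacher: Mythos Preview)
Your proposal is correct and follows essentially the same approach as the paper: both identify the stabilised undecided set $G=\bigcap_n(\overline{A}_n\setminus\underline{A}_n)$, use Condition~\ref{condition_intervals} to shrink the intervals on $G$, and then build a short chain of vectors linking $(\max I_i^n)_i$ (respectively $(\min I_i^n)_i$) to $p^\ast$ via two applications of Condition~\ref{condition_invariance}(2) and one application of the local constancy hypothesis. Your intermediate vectors $v^n$ and $w^n$ correspond precisely to the paper's $p^{(5)}$ and the symmetric analogue of $p^{(2)}$, and your explicit identification $\underline{A}_n=\A\setminus G$, $\overline{A}_n=\A\cup G$ (which you correctly derive from the sandwich $\underline{A}_n\subseteq\A\subseteq\overline{A}_n$ together with $\overline{A}_n\setminus\underline{A}_n=G$) makes the verification of the hypotheses of Condition~\ref{condition_invariance}(2) slightly more transparent than in the paper.
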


The condition on $p^\ast$ in Theorem \ref{theorem_convergence}
ensures that $p^\ast$ has a neighborhood on which $h$ is constant.

The FDR controlling procedure of \cite{Benjamini1995CFD}
(see Appendix \ref{appendix_benjamini_hochberg})
and the \cite{Bonferroni1936} correction
(see Appendix \ref{appendix_bonferroni_correction})
both satisfy Condition \ref{condition_invariance}
(see Corollary \ref{corollary_benjamini_hochberg} and
Corollary \ref{corollary_Bonferroni} in Appendix \ref{appendix_properties})
and the condition on $p^\ast$ in Theorem \ref{theorem_convergence}
(see Lemma \ref{lemma_invariance_delta} and
Lemma \ref{lemma_Bonferroni_invariance_delta}
in Appendix \ref{appendix_properties}) for almost all $p^\ast$.

The following third condition ensures that the confidence
intervals computed by the function $f$ in Algorithm \ref{algorithm_basic}
have a guaranteed  joint coverage probability.
The choice of $f$ given in Appendix \ref{appendix_clopper_pearson} satisfies
Condition \ref{condition_intervals} and Condition \ref{condition_probability}
(see Lemma \ref{lemma_clopper_pearson_satisfy_condition}
in Appendix \ref{appendix_clopper_pearson}).

\begin{condition}
\label{condition_probability}
For a given $\epsilon>0$, the function $f$ computes
confidence intervals $I_i^n$ in such a way that
$\PP(p_i^\ast \in I_i^n ~ \forall i,n) \geq 1-\epsilon$.
\end{condition}

The main theorem and Condition \ref{condition_probability} together
immediately give a bound on the probability of misclassifications.

\begin{corollary}
\label{corollary_convergence}
Under the conditions of Theorem \ref{theorem_convergence} and
under Condition \ref{condition_probability},
$$\PP(\exists n_0: \underline{A}_n = \A = \overline{A}_n ~\forall n \geq n_0) \geq 1-\epsilon,$$
i.e.\ the probability that all classifications are correct is at least $1-\epsilon$.
\end{corollary}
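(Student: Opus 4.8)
The plan is to obtain the corollary as a purely probabilistic consequence of Theorem \ref{theorem_convergence} together with Condition \ref{condition_probability}, via monotonicity of $\PP$. First I would single out the joint coverage event
\[
E := \{\, p_i^\ast \in I_i^n \ \text{ for all } i \in \{1,\dots,m\} \text{ and all } n \in \N \,\}.
\]
By Condition \ref{condition_probability} we have $\PP(E) \geq 1-\epsilon$. The hypotheses invoked here (Conditions \ref{condition_invariance} and \ref{condition_intervals}, plus the existence of $\delta>0$ such that $\|p-p^\ast\|<\delta$ implies $h(p)=h(p^\ast)$) are exactly those of Theorem \ref{theorem_convergence}, so that theorem applies verbatim.

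Next I would translate the theorem into an event inclusion. Theorem \ref{theorem_convergence} asserts that \emph{on} $E$ there exists $n_0 \in \N$ with $\underline{A}_n = \overline{A}_n = \A$ for all $n \geq n_0$; equivalently,
\[
E \subseteq \bigl\{\, \exists n_0 : \underline{A}_n = \A = \overline{A}_n \ \forall n \geq n_0 \,\bigr\}.
\]
Applying $\PP$ to both sides and using monotonicity gives
\[
\PP\bigl(\exists n_0 : \underline{A}_n = \A = \overline{A}_n \ \forall n \geq n_0\bigr) \;\geq\; \PP(E) \;\geq\; 1-\epsilon,
\]
which is the assertion of the corollary; the interpretation as ``all classifications are correct with probability at least $1-\epsilon$'' is then just a restatement, since on this event $\underline{A}_n$ and $\overline{A}_n$ eventually coincide with the ideal rejection set $\A$.

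I do not anticipate any real obstacle. The only point meriting a line of justification is measurability of the event on the right-hand side above: it is a countable combination of the events $\{\underline{A}_n = \A\}$ and $\{\overline{A}_n = \A\}$, and for each fixed $n$ these depend only on finitely many of the i.i.d.\ exceedance indicators $X_{ij}$ (through $S_i$, $k_i$ and hence the intervals $I_i^n$), so they are measurable and the countable combination is as well. All the mathematical content resides in Theorem \ref{theorem_convergence}; this corollary merely performs the bookkeeping that converts its conditional-on-$E$ conclusion into an unconditional high-probability bound.
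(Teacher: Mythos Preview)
Your proposal is correct and follows essentially the same approach as the paper: define the joint coverage event, invoke Theorem~\ref{theorem_convergence} on that event, and then apply Condition~\ref{condition_probability} together with monotonicity of $\PP$ to obtain the bound. The paper's proof is terser and omits your measurability remark, but the logical skeleton is identical.
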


\section{Simulation studies}
\label{section_empirical_comparison}
The first aim of this section is to demonstrate that \texttt{MMCTest}
can be used to classify thousands of hypotheses commonly encountered in real data studies
(Section \ref{subsection_real_application}).
Moreover, this section shows that when matching the effort,
\texttt{MMCTest} computes classifications
containing a number of
unclassified hypotheses which is comparable
to the number of misclassifications incurred by
current approaches like the naive method or the \texttt{MCFDR}
algorithm -- even though \texttt{MMCTest}
is able to guarantee the correctness of all its classified hypotheses
while for the two other methods,
misclassified hypotheses typically remain
unidentified in the testing result
(Section \ref{subsection_comparison_naive}
and Section \ref{subsection_comparison_mcfdr}).
An ad-hoc variant of \texttt{MMCTest}
computing a complete classification yields less misclassifications and random
classifications than the other methods, demonstrating
that \texttt{MMCTest} is the superior method for practical applications.
The aim of the last two sections is to empirically investigate the
dependance of \texttt{MMCTest} on certain parameters.
Section \ref{subsection_effort_dependence_number_hypotheses} studies the
dependance of the computational effort of \texttt{MMCTest} on the
number of hypotheses $m$.
We conclude by empirically assessing the runtime of \texttt{MMCTest} in
Section \ref{subsection_effort_dependence_unclassified_hypotheses}, demonstrating that whilst
a complete classification can be computationally very expensive, 
most hypotheses can be classified with a reasonable effort.

\subsection{The set-up}
\label{subsection_setup}
The following parameters were used throughout Section \ref{section_empirical_comparison}.
The batch size $\Delta$ in Algorithm \ref{algorithm_basic}
is increased by $a=1.25$ in every iteration,
starting with $\Delta_0=10$.
Confidence intervals are computed
using the function $f$
with \cite{ClopperPearson1934} confidence intervals
and parameters $\epsilon=0.01$ and $r=10000$
(see Appendix \ref{appendix_clopper_pearson}).
The Benjamini-Hochberg procedure
(at threshold $\alpha=0.1$) as defined in Appendix \ref{appendix_benjamini_hochberg}
always serves as multiple testing procedure.

We measure the effort of any algorithm in terms of $N$,
the total number of samples drawn during a run.

We use a yeast chemostat cultivation dataset of
\cite{Knijnenburg2009}. This dataset consists of $170$ microarrays of
yeast cultivations. The first $80$ microarrays correspond
to yeast which was grown aerobically, the second $90$
microarrays correspond to yeast which was grown anaerobically.
Every microarray reacts to $9335$ genes, thus giving rise to
$9335$ null hypotheses (no effect of the gene onto the response).

To speed up the computation of the simulation
studies in this and the following sections
as well as to have an underlying ``truth'' for
the \cite{Knijnenburg2009} dataset,
we estimated each of the $m=9335$ p-values once
by generating $10^6$ permutations per hypothesis as outlined in
the \Supp{}.
Such a number of permutations is far more
than what would commonly be used in practice.
We then define these approximated p-values to be the ideal
p-values $p_1^\ast,\ldots,p_m^\ast$
we are interested in, although they do not necessarily have to be
equal to the p-values underlying each hypothesis.
A plot of the ideal p-values $p_1^\ast,\ldots,p_m^\ast$
is given in the \Supp{}.

In the following sections,
we draw Bernoulli samples
with success probabilities $p_1^\ast,\ldots,p_m^\ast$
instead of generating actual permutations.
The classification obtained by applying
the Benjamini-Hochberg procedure directly to
the ideal p-values $p_1^\ast,\ldots,p_m^\ast$
is used to compute misclassifications.

The \Supp{} contains similar simulations as the ones
which are about to follow for two additional testing thresholds $\alpha$ (1\% and 5\%). 
The behavior of \texttt{MMCTest} is qualitatively similar.
Furthermore, the \Supp{} contains 
another comparison of \texttt{MMCTest}
to the naive method and to \texttt{MCFDR} on a simulated dataset with a larger
proportion of true null hypotheses than the one of
the dataset of \cite{Knijnenburg2009}, broadly
confirming the qualitative results of Sections
\ref{subsection_comparison_naive} and \ref{subsection_comparison_mcfdr}.

\subsection{Application to Real Data}
\label{subsection_real_application}
\texttt{MMCTest} is applied once
to the ideal p-values as described in Section \ref{subsection_setup}.

After having drawn $24.5 \cdot 10^6$ samples all but $100$ hypotheses are classified.
This corresponds to only around $2600$ samples per hypothesis,
thus making a classification with such a precision fairly easy to compute.
Drawing roughly  the same number of samples again
(a total number of $49.7 \cdot 10^6$ samples)
classifies all but $50$ hypotheses.

\texttt{MMCTest} can be stopped whenever the user's desired number of classifications is achieved.
All but $20$ hypotheses are classified after  $159 \cdot 10^6$ samples and all
but $10$ hypotheses after  $255 \cdot 10^6$ samples. A classification
of all but $5$ hypotheses is obtained after having drawn a total number
of $12 \cdot 10^9$ samples. This is, of course, extremely computationally intensive.
The total number of samples drawn for a classification
of all but $5$ hypotheses corresponds to roughly $1.3 \cdot 10^6$ samples per hypothesis.

A comparison to the classification result obtained by
applying the \cite{Benjamini1995CFD} procedure to the ideal p-values
shows that in all the classifications previously reported,
none of the decided hypotheses was wrongly classified.

\subsection{Comparison to the naive method}
\label{subsection_comparison_naive}
\begin{table}[tb]
\centering
\caption{Comparison of the naive method to \texttt{MMCTest}}\label{table_comparison_naive}
\begin{tabular}{rrrr||r|rr}
\multicolumn{4}{c||}{naive method} & \multicolumn{3}{c}{MMCTest}\\
&&&& guaranteed classification & \multicolumn{2}{c}{forced classification}\\
\hline
{\it s} & {\it mis} & {\it rc} & {\it N} & {\it unclassified hypotheses} & {\it mis} & {\it rc}\\
\hline
100	&	238	&	680	&	933500	&	7677	&	236	&	686	\\
1000	&	69	&	230	&	9335000	&	317	&	20	&	62	\\
10000	&	21	&	65	&	93350000&	32	&	3	&	6	\\
\hline
\end{tabular}
\begin{minipage}{\textwidth}
$s$: number of samples used by the naive method for each hypothesis;
\textit{mis}: average number of misclassifications;
\textit{rc}: number of randomly classified hypotheses;
$N$: average total number of samples.
\end{minipage}
\end{table}

We compare \texttt{MMCTest} to the sampling scheme which draws a
constant number of samples $s$ for each hypothesis.
It then estimates each p-value via its proportion of exceedances
(a formula for this estimate is given in the \Supp{})
and computes a classification by
applying the multiplicity correction to the estimates,
thus treating the estimated p-values as if they were ideal p-values.
We will call this the naive method.
The naive method is widely used in connection with the
False Discovery Rate approach to
evaluate real biological data
\citep{Cohen2012,Gusenleitner2012,Nusinow2012,Rahmatallah2012}.

The results presented in this and the following section
are based on $10000$ runs.
In each run, we draw Bernoulli samples for the naive method
and for \texttt{MMCTest} as described in
Section \ref{subsection_setup}.
The sampling standard deviation of averages
is less than the least significant digit we report in tables.

Table \ref{table_comparison_naive} shows the simulation results.
The second column displays the average number of misclassifications 
for the naive method. A considerable
number of misclassifications occurs;
even when using $s=10000$ samples to estimate each p-value
about $21$ misclassifications still occur
on average for the naive method.

The third column in Table \ref{table_comparison_naive} shows an
alternative criterion, the number of randomly classified hypotheses
(\textit{rc}), which we define as follows. Let $f^r_i$ be the frequency
of rejection of hypothesis $H_{0i}$ in the 10000 runs.
A hypothesis $H_{0i}$ is considered to be randomly classified
if $\min(f^r_i,1-f^r_i)$ is strictly larger than 0.1.

The number of randomly classified hypotheses is substantially larger than the 
average number of misclassifications. This  demonstrates that for a substantial number of hypotheses,
the decision reported is mainly determined by the Monte Carlo simulation (and not by the observed data).

The total number of samples $N$ drawn during each run of the naive method
is given in the fourth column of Table \ref{table_comparison_naive}.

\texttt{MMCTest} is run on the ideal p-values
(see Section \ref{subsection_setup})
using at most the total number of samples the naive method had used.
The fifth column in Table \ref{table_comparison_naive} shows the
average number of remaining unclassified hypotheses upon termination.

The average number of unclassified hypotheses of
\texttt{MMCTest} is larger than the number of misclassifications
of the naive method. However, \texttt{MMCTest} gives a
result which is proven to be reliable with pre-specified probability
in contrast to the one computed by the naive method.
For large values of $s$, \texttt{MMCTest} yields
average numbers of unclassified hypotheses which almost
equal the number of misclassifications observed for the
naive method even though
\texttt{MMCTest} guarantees the correctness
of its classified hypotheses while
the misclassifications in the testing result
of the naive method typically remain unidentified.

As shown in Table \ref{table_comparison_naive},
at a high precision,
\texttt{MMCTest} yields
less unclassified hypotheses than the naive method
yields randomly classified hypotheses.
This indicates that \texttt{MMCTest} gets competitive
for a realistic precision and starts overtaking the naive method for
multiple testing settings which are evaluated at high precision.

\texttt{MMCTest} is stopped on reaching the number of
samples used by the naive method. Nevertheless, all the theoretical
guarantees stated in Section \ref{section_algorithm} are still valid,
but not all hypotheses are being classified.
A complete classification in an ad-hoc fashion can
be obtained by applying the multiple testing procedure $h$
to the p-value estimates $\hat{p}_i = (S_i+1)/(k_i+1)$
after stopping ($S_i$ and $k_i$ are as in Algorithm
\ref{algorithm_basic}). The theoretical guarantees
of Section \ref{section_algorithm} are not valid any more for
the ad-hoc procedure.

The two last columns of Table \ref{table_comparison_naive}
show the average number of misclassifications and the number of randomly
classified hypotheses for the ad-hoc procedure which forces a
complete classification upon termination.
With this simple modification, \texttt{MMCTest} yields considerably
lower numbers of misclassifications and randomly classified hypotheses
for a high precision than the naive method.

The forced classification should only be used if a complete
classification is needed within a limited effort. In all other
cases, whenever the algorithm is stopped, we recommend using
the partioning of the hypotheses into rejected, non-rejected and
not classified hypotheses as testing result of the algorithm.

\subsection{Comparison to \texttt{MCFDR}}
\label{subsection_comparison_mcfdr}

\begin{table}[tb]
\centering
\caption{Comparison of \texttt{MCFDR} to \texttt{MMCTest}}\label{table_comparison_mcfdr}
\begin{tabular}{rrrr||r|rr}
\multicolumn{4}{c||}{MCFDR} & \multicolumn{3}{c}{MMCTest}\\
&&&& guaranteed classification & \multicolumn{2}{c}{forced classification}\\
\hline
{\it $u$} & {\it mis} & {\it rc} & {\it N} & {\it unclassified hypotheses} & {\it mis} & {\it rc}\\
\hline
10	&	172	&	524	&	$1.3\cdot 10^6$	&	7547	&	208	&	592	\\
20	&	123	&	389	&	$2.2\cdot 10^6$	&	7268	&	130	&	398	\\
50	&	80	&	267	&	$5.3\cdot 10^6$	&	763	&	50	&	168	\\
100	&	57	&	186	&	$10.5\cdot 10^6$	&	288	&	18	&	54	\\
200	&	40	&	128	&	$21.0\cdot 10^6$	&	132	&	9	&	29	\\
500	&	25	&	72	&	$52.5\cdot 10^6$	&	49	&	4	&	8	\\
1000	&	18	&	54	&	$104.9\cdot 10^6$	&	30	&	3	&	6	\\
\hline
\end{tabular}
\begin{minipage}{\textwidth}
$u$: number of test statistics exceeding the reference statistic
(tuning parameter of \texttt{MCFDR});
\textit{mis}: average number of misclassifications;
\textit{rc}: number of randomly classified hypotheses;
$N$: average total number of samples.
\end{minipage}
\end{table}

We now focus on a comparison of \texttt{MMCTest} to
\texttt{MCFDR} of \cite{Sandve2011},
given in Table \ref{table_comparison_mcfdr}.
\texttt{MCFDR} is run first on
the ideal p-values (see Section \ref{subsection_setup}) already used
for the comparison of \texttt{MMCTest} to the naive method
and \texttt{MMCTest} is then applied with matched effort.
The \texttt{MCFDR} algorithm has one tuning parameter:
the number $u$ of test statistics exceeding the reference
statistic before stopping (this number was called $h$ in \cite{Sandve2011}).
In \cite{Sandve2011} the authors
recommend using $u=20$, but we will also consider larger values.

In its original statement in \cite{Sandve2011}, the
\texttt{MCFDR} algorithm uses
a modification of the Benjamini-Hochberg procedure of
\cite{PoundsCheng2006} which uses an
estimate $\hat{\pi}_0(p)$ of the proportion of true null hypotheses.
\texttt{MCFDR} can also
be used together with the standard Benjamini-Hochberg procedure by
setting $\hat{\pi}_0(p)$ 
to one. The following results have
been computed using the standard Benjamini-Hochberg procedure
(as defined in Section \ref{appendix_benjamini_hochberg})
for both \texttt{MCFDR} and \texttt{MMCTest}.

The first columns of Table \ref{table_comparison_mcfdr} show
the average number of misclassifications
\textit{mis} and the number of randomly classified
hypotheses \textit{rc} for \texttt{MCFDR} for various values of $u$.
Similar to Table \ref{table_comparison_naive},
the number of randomly classified hypotheses
occurring for \texttt{MCFDR}
is generally larger than the average number
of misclassifications.

When using \texttt{MMCTest}, the number of unclassified hypotheses is
generally larger than the number of misclassifications for \texttt{MCFDR}.
The advantage of using \texttt{MMCTest} is, as before,
the guaranteed classification.

\texttt{MMCTest} becomes more competitive for higher precisions.
For large values of $u$, the \texttt{MMCTest} algorithm
classifies all hypotheses with
confidence up to a number which almost equals the number of
misclassifications of \texttt{MCFDR}, and which is less
than the number of randomly classified hypotheses of \texttt{MCFDR}.

The forced classification in \texttt{MMCTest}
(the last two columns in Table \ref{table_comparison_mcfdr})
yields a considerably better classification than \texttt{MCFDR} for high precisions,
both in terms of misclassifications and
randomly classified hypotheses.

\subsection{Dependance of the effort on the number of hypotheses}
\label{subsection_effort_dependence_number_hypotheses}

\begin{figure}[tb]
  \centering
  \includegraphics[width=0.5\textwidth]{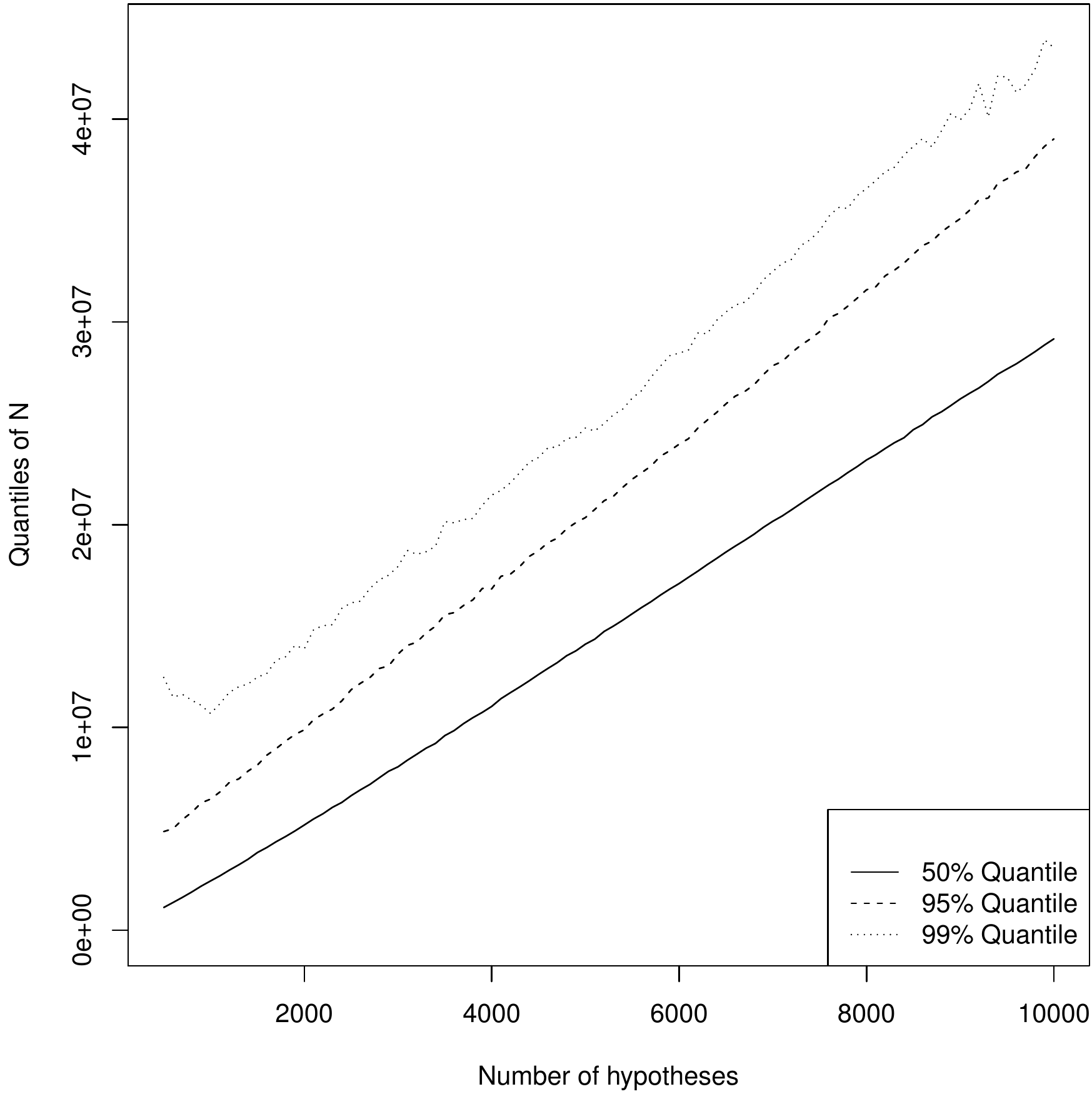}
  \caption{$50\%$-, $95\%$- and $99\%$-quantiles
of the effort $N$ against the number of hypotheses $m$.
Quantiles are computed based on $10000$ runs
classifying all but $c=0.01m$ hypotheses.
P-values for various values of $m$ are obtained by resampling with
replacement from the ideal p-values
(see Section \ref{subsection_setup}).}
  \label{figure_dependence_number_hypotheses}
\end{figure}

How does the number of samples $N$ depend on the number of hypotheses?

Figure \ref{figure_dependence_number_hypotheses} shows
$50\%$-, $95\%$- and $99\%$-quantiles of the effort $N$
for a classification of $m$ hypotheses, where
$m$ ranges from $500$ to $10000$ in steps of $100$.
Quantiles are computed based on $10000$ repetitions.
For each value of $m$ and each repetition,
a new p-value distribution is obtained by resampling with
replacement from the ideal p-values $p_1^\ast,\ldots,p_m^\ast$
(see Section \ref{subsection_setup}).
\texttt{MMCTest} is then
run on the new distribution obtained in this way
until all but $c=0.01m$ hypotheses are classified.

Figure \ref{figure_dependence_number_hypotheses} indicates that
the effort $N$ for a classification of all but $c=0.01m$ hypotheses
increases linearly in $m$.

\subsection{Dependance of the effort on the number of unclassified hypotheses}
\label{subsection_effort_dependence_unclassified_hypotheses}

\begin{figure}[tb]
  \centering
  \includegraphics[width=0.5\textwidth]{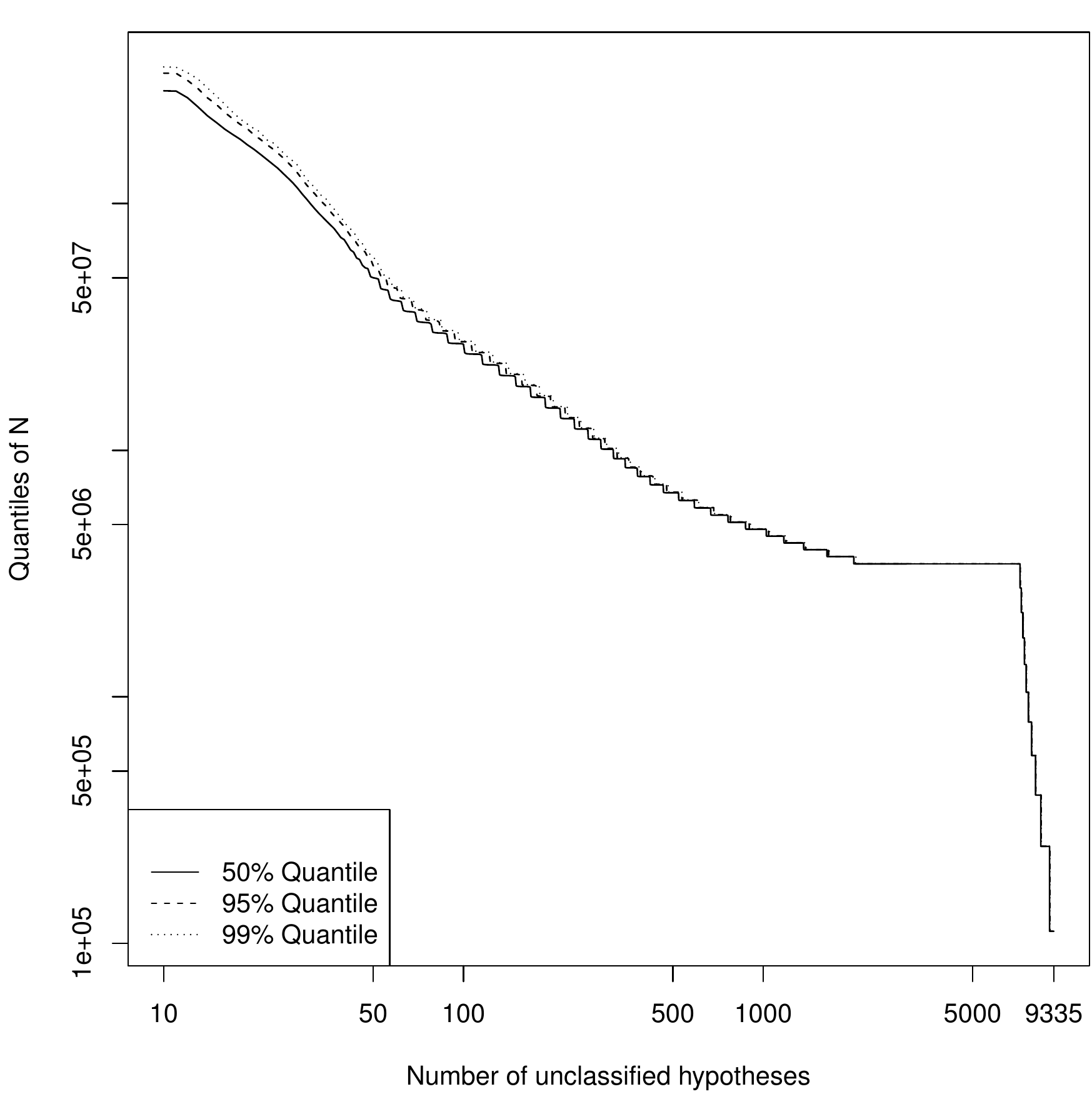}
  \caption{Effort $N$ against number of unclassified hypotheses.
$50\%$-, $95\%$- and $99\%$-quantiles of $N$
based on $1000$ simulations. Log-scale on both axes.
The classification becomes more complete
as the quantile curves approach the left hand side.}
  \label{figure_complexity}
\end{figure}

Figure \ref{figure_complexity} shows the dependance of the
effort $N$ on the number of unclassified hypotheses for a
fixed number of hypotheses $m$.
The right hand side of Figure \ref{figure_complexity} corresponds
to the situation of all hypotheses being unclassified.
The classification becomes more complete
as the quantile curves approach the left hand side.

To generate this figure, \texttt{MMCTest} is applied $1000$ times
in the following way to the ideal p-values
$p_1^\ast,\ldots,p_m^\ast$
($m=9335$, see Section \ref{subsection_setup}):
The current size $c$ of the
set $\overline{A}_n \setminus \underline{A}_n$
and the current total number of samples $N_c$
are recorded in each iteration $n$.
If several p-values are classified together
in an iteration, some $c$ do not have a corresponding $N_c$. To be
conservative, a missing value  $N_{c}$ is set to $N_{c'}$ for the
largest $c' < c$ for which $N_{c'}$ is not missing.
Each time the algorithm is run until all but $c=10$
hypotheses are classified.

The effort is reasonable for classifying all but a few
hypotheses. Classifying the last few hypotheses seems to be
computationally intensive.

The steps in Figure \ref{figure_complexity} are caused by several
hypotheses with p-values far off the Benjamini-Hochberg
line being classified together. This effect also occurs
in Figure 2 of the \Supp{} which shows that
at a certain iteration $n$, several hypotheses are classified together,
thereby causing a sudden increase in the size of the set $\underline{A}_n$.

\section{Discussion}
\label{section_discussion}
We presented an open-ended sequential algorithm designed to
implement multiplicity corrections for
multiple Monte Carlo tests in the setting
where the ideal p-values are unknown and
can only be approximated through simulation.
In order to ensure repeatability and objectivity for
Monte Carlo based multiple testing, we aim
to compute the same classification as the one based on the ideal p-values.

The main feature of  \texttt{MMCTest}  is that its output is guaranteed to
be correct with a pre-specified probability, meaning that
all its classifications are identical to the classifications based on the
ideal p-values.

Our simulation study shows that a complete classification can be
computationally  expensive, but that  most hypotheses can be classified using a
reasonable effort.
For a realistic precision, \texttt{MMCTest} draws level
with the performance of current methods which unlike \texttt{MMCTest} do
not give a guarantee on their classifications being correct, such as the naive approach or
the \texttt{MCFDR} algorithm. An ad-hoc variant of \texttt{MMCTest}
outperforms the naive method and \texttt{MCFDR} both in terms of
misclassifications and randomly classified hypotheses.
Tuning the parameter $r$ of the spending sequence,
spending all the remaining error probability or 
matching the effort exactly in the last iteration
leaves scope for further research.
A detailed theoretical analysis of the computational effort of the proposed algorithm
is outside the scope of this article.

This article identified conditions which guarantee the bounded
risk of classification errors and the convergence of the algorithm's
output to the classification computed with the ideal p-values. By
verifying these conditions we showed that the \texttt{MMCTest}
algorithm works for the \cite{Benjamini1995CFD} procedure
as well as for the \cite{Bonferroni1936} correction.
We conjecture that our algorithm also works for other FDR controlling
procedures and for procedures controlling FDR-related criteria (e.g.\
the False Non-Discovery Rate FNR).

\section*{\Supp{}}
Additional information for this article is available online
including further details on Section \ref{section_empirical_comparison}.
In particular, it contains details on the computation of the test statistic,
a similar set of simulation studies
for two additional testing thresholds $\alpha=0.01$
and $\alpha=0.05$ as well as another comparison of \texttt{MMCTest}
to the naive method and to \texttt{MCFDR} on a simulated dataset with a larger
proportion of true null hypotheses than the one used in the present article.

\appendix

\section{Clopper-Pearson confidence intervals}
\label{appendix_clopper_pearson}
Our particular choice of the function $f$ used
in Example \ref{example_visualization} and the empirical studies
in Section \ref{section_empirical_comparison}
computes ``exact'' \cite{ClopperPearson1934} confidence intervals.
We choose $f$ in such a way as to guarantee a joint coverage probability
of $1-\epsilon$ for all confidence intervals over all iterations,
where the overall error probability $\epsilon$ is chosen by the user.

A sequence $\left( \eta_k \right)_{k \in \N_0}$
satisfying $\eta_0=0$ and $\eta_k \rightarrow \epsilon$ as $k \rightarrow \infty$
is used to control how $\epsilon$ is spent over the iterations
of the algorithm. We will call $\left( \eta_k \right)$
\texttt{spending sequence}.
Throughout the article we use $\eta_k := \frac{k}{k+r} \epsilon$
for some constants $r>0$ and $\epsilon>0$
(all the parameters used for the simulation studies are
given in Section \ref{subsection_setup}).

We then define $f(S,k,\Delta)$ to be
the \cite{ClopperPearson1934} confidence interval
based on $S$ and $k$ (see Algorithm \ref{algorithm_basic})
with a coverage probability of $1-(\eta_k-\eta_{k-\Delta})/m$.
Precisely,
$$f(S,k,\Delta) := \begin{cases}
[ 1 - q^\text{Beta}_{k-S, S+1}(\rho_{k}),
1 - q^\text{Beta}_{k+1-S, S}(1 - \rho_{k}) ] & 0<S<k,\\
[ 0, 1-\rho_{k}^{1/k} ] & S=0,\\
[ \rho_{k}^{1/k}, 1 ] & S=k,
\end{cases}$$
where $\rho_{k}=(\eta_{k}-\eta_{k-\Delta})/(2m)$.
The quantiles $q^\text{Beta}_{\alpha, \beta}(\epsilon)$ of the
$\text{Beta}(\alpha,\beta)$ distribution being used are defined by
$\PP(Z \leq q^\text{Beta}_{\alpha,\beta}(\epsilon)) = \epsilon$
for a random variable $Z$ with probability density function
$\frac{\Gamma(\alpha+\beta)}{\Gamma(\alpha) \Gamma(\beta)} z^{\alpha-1} (1-z)^{\beta-1}$.
The \cite{ClopperPearson1934} confidence intervals we compute are
slightly conservative in practice \citep{Li2009}.

We show in the following lemma
that our particular choice of $f$ defined beforehand satisfies
Condition \ref{condition_intervals} and Condition \ref{condition_probability}
stated in Section \ref{subsection_conditions}.
Therefore, $\epsilon$ is a bound on the probability of having
any false classification.
Other functions $f$, for example based on other spending sequences, can
obviously be used as long as they satisfy
Conditions \ref{condition_intervals} and \ref{condition_probability}.

\begin{lemma}
\label{lemma_clopper_pearson_satisfy_condition}
The confidence intervals computed by the function $f$ satisfy
Conditions \ref{condition_intervals} and \ref{condition_probability}.
\end{lemma}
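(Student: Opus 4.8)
The plan is to verify the two conditions separately, using only the telescoping structure of the spending sequence $(\eta_k)$ together with standard properties of Clopper--Pearson intervals. For Condition \ref{condition_probability} I would fix a hypothesis $i$ and use that the batch sizes in Algorithm \ref{algorithm_basic} are updated deterministically, so the cumulative sample sizes that hypothesis $i$ can ever attain form a fixed sequence $0 = k^{(0)} < k^{(1)} < k^{(2)} < \cdots$ with batch $\Delta^{(\ell)} := k^{(\ell)} - k^{(\ell-1)}$ used at iteration $\ell$. Writing $S_i^{(\ell)} := \sum_{j=1}^{k^{(\ell)}} X_{ij}$, which is $\mathrm{Bin}(k^{(\ell)}, p_i^\ast)$-distributed, and $f^{(\ell)} := f(S_i^{(\ell)}, k^{(\ell)}, \Delta^{(\ell)})$, nestedness of the $I_i^n$ and the fact that hypothesis $i$ only receives fresh samples while active give the inclusion $\{\exists n : p_i^\ast \notin I_i^n\} \subseteq \bigcup_{\ell \ge 1} \{p_i^\ast \notin f^{(\ell)}\}$, whose right-hand side no longer refers to the random iteration at which hypothesis $i$ gets classified. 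Since $f^{(\ell)}$ is the Clopper--Pearson interval for $p_i^\ast$ at per-tail level $\rho_{k^{(\ell)}}$, its coverage property gives $\PP(p_i^\ast \notin f^{(\ell)}) \le 2\rho_{k^{(\ell)}} = (\eta_{k^{(\ell)}} - \eta_{k^{(\ell)}-\Delta^{(\ell)}})/m = (\eta_{k^{(\ell)}} - \eta_{k^{(\ell-1)}})/m$, using $k^{(\ell)} - \Delta^{(\ell)} = k^{(\ell-1)}$. Summing this telescoping series and using $\eta_0 = 0$ and $\eta_k \to \epsilon$ yields $\PP(\exists n : p_i^\ast \notin I_i^n) \le \epsilon/m$, and a union bound over $i = 1, \dots, m$ gives $\PP(p_i^\ast \in I_i^n \ \forall i,n) \ge 1 - \epsilon$, which is Condition \ref{condition_probability}.

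For Condition \ref{condition_intervals} I would first bound the per-tail levels from below: a short computation gives $\eta_k - \eta_{k-\Delta} = \epsilon\, \Delta\, r / \big( (k+r)(k-\Delta+r) \big)$, which is increasing in $\Delta$ for $1 \le \Delta \le k$, so $\rho_k = (\eta_k - \eta_{k-\Delta})/(2m) \ge \epsilon\, r / \big( 2m(k+r)^2 \big) =: \underline{\rho}_k$, uniformly over the admissible $\Delta$ (for $\Delta \ge k$, which never occurs in Algorithm \ref{algorithm_basic}, $\rho_k$ is only larger). Writing the two endpoints of $f(S,k,\Delta)$ as $p_- \le p_+$, these are the Clopper--Pearson lower and upper confidence limits at per-tail level $\rho_k$; they are characterised, for $1 \le S \le k-1$, by $\PP(\mathrm{Bin}(k,p_-) \ge S) = \rho_k$ and $\PP(\mathrm{Bin}(k,p_+) \le S) = \rho_k$ (with $p_- = 0$ when $S = 0$ and $p_+ = 1$ when $S = k$), and always satisfy $p_- \le S/k \le p_+$. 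Applying Hoeffding's inequality to each tail gives $S/k - p_- \le \sqrt{\ln(1/\rho_k)/(2k)}$ and $p_+ - S/k \le \sqrt{\ln(1/\rho_k)/(2k)}$, hence $|f(S,k,\Delta)| = p_+ - p_- \le \sqrt{2\ln(1/\rho_k)/k} \le \sqrt{2\ln(1/\underline{\rho}_k)/k}$. Since $\ln(1/\underline{\rho}_k) = 2\ln(k+r) + \ln(2m/(\epsilon r)) = O(\ln k)$, this bound is uniform in $S$ and $\Delta$ and tends to $0$ as $k \to \infty$, which is Condition \ref{condition_intervals}.

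I expect the length estimate to be the only step requiring genuine care: because the per-tail level $\rho_k$ is itself allowed to shrink with $k$, one cannot simply invoke a fixed-confidence interval-width bound. The point that makes it work is that the spending sequence $\eta_k = \tfrac{k}{k+r}\epsilon$ forces $\rho_k$ to decay no faster than order $k^{-2}$, so that $\ln(1/\rho_k)$ grows only logarithmically and the $\sqrt{\ln(1/\rho_k)/k}$-type bound coming from an exponential tail inequality still vanishes; a spending sequence approaching $\epsilon$ exponentially fast would break this. The remaining ingredients --- the telescoping identity for the spending sequence, the coverage and endpoint characterisation of Clopper--Pearson intervals, and Hoeffding's inequality --- are routine.
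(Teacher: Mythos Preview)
Your proposal is correct and follows essentially the same approach as the paper's own proof: Hoeffding's inequality to bound the Clopper--Pearson interval length by $\sqrt{2\ln(1/\rho_k)/k}$, combined with the polynomial decay of $\eta_k - \eta_{k-\Delta}$, for Condition~\ref{condition_intervals}; and the telescoping identity $\sum_\ell (\eta_{k^{(\ell)}} - \eta_{k^{(\ell-1)}}) = \epsilon$ followed by a union bound over $i$ for Condition~\ref{condition_probability}. Your treatment is in fact slightly more careful than the paper's in two places: you make the uniformity over $\Delta$ explicit by bounding $\rho_k$ below at $\Delta = 1$, whereas the paper simply asserts $\eta_k - \eta_{k-\Delta} \sim k^{-2}$ without addressing the $\Delta$-dependence; and you spell out the inclusion $\{\exists n : p_i^\ast \notin I_i^n\} \subseteq \bigcup_{\ell \ge 1}\{p_i^\ast \notin f^{(\ell)}\}$ to decouple the coverage bound from the random classification time, which the paper leaves implicit.
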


\begin{proof}
First, we consider an individual Clopper-Pearson confidence
interval $I_i^n$ computed in Algorithm \ref{algorithm_basic}
using $f$ as defined in Appendix \ref{appendix_clopper_pearson}.
To ease notation, we drop the indices $i$ and $n$.

We show that $|I|\leq 2 \xi$,
where  $\xi=\sqrt{ \frac{-1}{2 k} \log \rho }$ and
$\rho=(\eta_k-\eta_{k-\Delta})/(2m)$.
The following probabilities are conditional on $S$ and $k$. 

Suppose $S < k$. Then
the upper limit $p_u$ of the interval $I$ is the solution to
$\PP(N \leq S|p=p_u) = \rho$,
where  $N \sim \text{Binomial}(k,p)$.
If $p > S/k +\xi$  then, by Hoeffding's inequality \citep{Hoeffding1963},
\begin{align*}
\PP(N \leq S)
= \PP \left( \frac{N}{k} - \E \left( \frac{N}{k} \right)
\leq \frac{S}{k} - \E \left( \frac{N}{k} \right) \right)
\leq \exp \left( - \frac{2 (S/k-p)^2 k^2}{k} \right)
< \rho.
\end{align*}
Thus $p_u \leq S/k + \xi$.
If $S=k$ then $p_u=1$, implying $p_u \leq S/k + \xi$.

Similarly, it can be shown that the lower limit $p_l$ of $I$
satisfies $p_l \geq S/k - \xi$.
Hence, $|I| = p_u-p_l \leq 2 \xi$.

Now consider $|f(\cdot,k,\cdot)|$ for
$k \rightarrow \infty$, see Condition \ref{condition_intervals}.
The function $f(S,k,\Delta)$ given in Appendix \ref{appendix_clopper_pearson}
computes Clopper-Pearson confidence intervals
with coverage probability $1-(\eta_k-\eta_{k-\Delta})/m$,
where $\eta_k := \frac{k}{k+r} \epsilon$ for a constant $r>0$.
The sequence $\eta_k$ satisfies
$\eta_{k}-\eta_{k-\Delta} \sim k^{-2}$, implying
$\log( \eta_{k}-\eta_{k-\Delta} ) = o(k)$.
As $|I^n_i|\leq 2\sqrt{ \log ((\eta_k-\eta_{k-\Delta})/(2m)) /(-2 k)}$,
$|I^n_i|\to 0$ as $k\to \infty$.
This proves Condition \ref{condition_intervals}.

Second, we show that the function $f$ given in
Appendix \ref{appendix_clopper_pearson}
computes confidence intervals in such a way that
$\PP(p^\ast_i \in I_i^n ~ \forall i,n) \geq 1-\epsilon$,
thus satisfying Condition \ref{condition_probability}.

Let $k_i^n$ denote the value of $k_i$ in iteration $n$,
and let $\Delta^n$ denote the value of $\Delta$ in iteration $n$,
where $k_i^0 = k_i^1 - \Delta^1 = 0$, $i \in \{ 1,\ldots,m \}$.
The function $f$ defined in Appendix \ref{appendix_clopper_pearson} computes
Clopper-Pearson confidence intervals $I_i^n$ such that
$\PP(p_i^\ast \notin I_i^n) \leq (\eta_{k_i^n}-\eta_{k_i^n-\Delta^n})/m$.

This then yields
\begin{align*}
  \PP(\exists i,n: p^\ast_i \notin I_i^n)
  \leq \sum_{i=1}^m \sum_{n=1}^\infty \PP(p^\ast_i \notin I_i^n)
  \leq \sum_{i=1}^m \sum_{n=1}^\infty (\eta_{k_i^n}-\eta_{k_i^n-\Delta^n})/m
  = \frac{1}{m} \sum_{i=1}^m \epsilon
  = \epsilon,
\end{align*}
using properties of $\eta_k = \frac{k}{k+r} \epsilon$, where $r>0$ is constant.
Condition \ref{condition_probability} is thus satisfied.
\end{proof}

\section{Some properties of multiple testing procedures}
\label{appendix_properties}
In this section we discuss how and under which circumstances
two multiple testing procedures, namely the
\cite{Benjamini1995CFD} procedure
and the \cite{Bonferroni1936} correction,
satisfy the conditions of Theorem \ref{theorem_convergence}.

In this section and the following sections,
$A^c$ denotes the complement of $A \subseteq \{ 1,\ldots,m \}$
with respect to $\{ 1,\ldots,m \}$, where
$m$ is the number of hypotheses under consideration.

\subsection{Properties of the Benjamini-Hochberg procedure}
\label{appendix_benjamini_hochberg}
The False Discovery Rate controlling procedure of
\cite{Benjamini1995CFD}
with threshold $\alpha>0$ is defined as follows.
Given $m$ p-values $p_1,\ldots,p_m$,
their order statistic is denoted by
$p_{(1)} \leq p_{(2)} \leq \ldots \leq p_{(m)}$.
In case of a tie, equal values are assigned a rank in arbitrary
order.
Let $k$ be the largest index $i$ for which
$p_{(i)} \leq \frac{i}{m} \alpha$.
Then, rejecting all the hypotheses corresponding to $p_{(1)},\ldots,p_{(k)}$
ensures that the FDR is at most $\alpha$.
The procedure can  be expressed as
$$h(p) = \left\{ i \in \{ 1,\ldots,m \}: ~\exists j: r_p(j) \geq r_p(i)
\text{ and } m \frac{p_j}{r_p(j)} \leq \alpha \right\},$$
where $r_p(i)$ denotes the rank
of $p_i$ in $p_{(1)} \leq p_{(2)} \leq \ldots \leq p_{(m)}$.

The following theorem states three properties of
the Benjamini-Hochberg procedure $h$
which are slightly stronger than Condition \ref{condition_invariance}.

\begin{theorem}
\label{theorem_benjamini_hochberg}
\begin{enumerate}
  \item $h$ is monotonic.
  \label{theorem_item_monotonicity}
  \item Let $p,q \in [0,1]^m$.
If $q_i \leq \frac{|h(p)| \alpha}{m}$ $\forall i \in h(p)$
and $q_i = p_i$ $\forall i \notin h(p)$, then $h(p)=h(q)$.
  \label{theorem_item_invariance_rejection_area}
  \item Let $p,q \in [0,1]^m$.
If $q_i=p_i$ $\forall i \in h(p)$ and
$q_i > \frac{\alpha}{m}r_p(i)$ $\forall i \notin h(p)$,
then $h(p) = h(q)$.
  \label{theorem_item_invariance_nonrejection_area}
\end{enumerate}
\end{theorem}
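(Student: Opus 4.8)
The plan is to prove the three properties of the Benjamini-Hochberg procedure $h$ by working directly with the characterization $h(p) = \{ i : \exists j:\ r_p(j) \geq r_p(i) \text{ and } m p_j / r_p(j) \leq \alpha \}$, and with the equivalent description via the largest index $k$ with $p_{(k)} \leq \frac{k}{m}\alpha$. The key elementary fact to establish first is that $i \in h(p)$ if and only if $p_i \leq \frac{|h(p)|}{m}\alpha$: indeed, $h(p)$ consists exactly of the hypotheses with the $|h(p)|$ smallest p-values (ties broken by the fixed ranking), and $|h(p)| = k$ where $k$ is the Benjamini-Hochberg cutoff index, so $p_i \leq p_{(k)} \leq \frac{k}{m}\alpha$ for $i \in h(p)$, while $i \notin h(p)$ forces $r_p(i) > k$ and then $p_i = p_{(r_p(i))} > \frac{r_p(i)}{m}\alpha \geq \frac{k}{m}\alpha$ by maximality of $k$. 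This ``threshold'' reformulation is the workhorse for all three parts.

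For part (\ref{theorem_item_monotonicity}), monotonicity, I would argue that decreasing some coordinates of $p$ to get $q \leq p$ can only increase the cutoff index. Concretely, if $i \in h(p)$ then $p_i \leq \frac{|h(p)|}{m}\alpha$, and I want $i \in h(q)$. The cleanest route is to show the cutoff $k_q \geq k_p$: take the index set $h(p)$, which has $|h(p)| = k_p$ elements, all with $q$-value at most their $p$-value at most $\frac{k_p}{m}\alpha$; the $k_p$-th smallest value of $q$ is therefore at most $\frac{k_p}{m}\alpha$, so $k_p$ itself witnesses a valid index for $q$, giving $k_q \geq k_p$ and $h(q) \supseteq h(p)$ (the $k_q$ smallest $q$-values include the $k_p$ hypotheses in $h(p)$, since those are precisely the ones pushed down). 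One has to be a little careful about how lowering p-values reshuffles ranks and ties, but the count-based argument (``there are at least $k_p$ hypotheses with $q$-value $\leq \frac{k_p}{m}\alpha$'') sidesteps the need to track individual ranks.

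For parts (\ref{theorem_item_invariance_rejection_area}) and (\ref{theorem_item_invariance_nonrejection_area}), the strategy is to show $h(q) = h(p)$ by a two-sided inclusion using the threshold characterization. In part (\ref{theorem_item_invariance_rejection_area}): lowering the rejected p-values to anything $\leq \frac{|h(p)|\alpha}{m}$ while leaving the non-rejected ones fixed means the $|h(p)|$ hypotheses in $h(p)$ still have the $|h(p)|$ smallest $q$-values (each is $\leq \frac{|h(p)|\alpha}{m} \leq \frac{r_p(i)\alpha}{m} < q_i$ for any $i \notin h(p)$, using the maximality bound above), so the cutoff index stays exactly $|h(p)|$: the index $|h(p)|$ still works (its $q$-value is $\leq \frac{|h(p)|}{m}\alpha$), and no larger index can work since the $(|h(p)|{+}1)$-st smallest $q$-value equals $p_{(|h(p)|+1)} > \frac{|h(p)|+1}{m}\alpha$ by maximality of $k_p$. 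For part (\ref{theorem_item_invariance_nonrejection_area}), one uses monotonicity in one direction — raising non-rejected p-values only shrinks $h$, and I must show nothing that was rejected becomes non-rejected — combined with the hypothesis $q_i > \frac{\alpha}{m} r_p(i)$ for $i \notin h(p)$, which is exactly what is needed to guarantee no raised coordinate sneaks below the cutoff line and none of the still-small rejected coordinates gets displaced in rank.

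The main obstacle I anticipate is bookkeeping around ties and the arbitrary-but-fixed tie-breaking rule in the definition of the rank function $r_p$: after modifying some coordinates, the induced order statistic and its ranking change, and one must make sure statements like ``$h(q)$ is the set of the $k_q$ smallest $q$-values'' remain literally true and consistent with whatever tie-breaking is used. I expect this is handled by consistently phrasing every step in terms of cardinalities and the threshold value $\frac{|h(p)|\alpha}{m}$ rather than in terms of which concrete index sits at which rank, so that the tie-breaking rule never actually enters the argument — but verifying that this rephrasing is faithful to the stated definition of $h$ is the delicate part. The remaining steps are then short inequality chasing.
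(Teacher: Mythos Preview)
Your threshold characterization $i \in h(p) \iff p_i \leq \frac{|h(p)|}{m}\alpha$ is correct, and your count-based proof of monotonicity is cleaner than the paper's, which instead increases one coordinate at a time and tracks the resulting rank changes through a two-case analysis. Part~\ref{theorem_item_invariance_rejection_area} is also fine and close in spirit to the paper's argument.

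There is, however, a genuine gap in your plan for part~\ref{theorem_item_invariance_nonrejection_area}. You propose to obtain $h(q) \subseteq h(p)$ via monotonicity, on the grounds that the non-rejected p-values are being ``raised''. But the hypothesis does \emph{not} assert $q_i \geq p_i$ for $i \notin h(p)$; it only says $q_i > \frac{\alpha}{m} r_p(i)$. Since $p_i > \frac{\alpha}{m} r_p(i)$ already holds for every non-rejected~$i$, the new value $q_i$ may be strictly smaller than $p_i$ (e.g.\ $p_i = 0.9$, $q_i = 0.2$, $\frac{\alpha}{m}r_p(i) = 0.15$), so monotonicity is unavailable. The inclusion $h(q)\subseteq h(p)$ must be argued directly: one has to show that no non-rejected index $i$ ends up with $q_i \leq \frac{\alpha}{m} r_q(i)$, and since the non-rejected $q$-values may be reshuffled among themselves, their $q$-ranks can differ from their $p$-ranks. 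The paper handles this by a case split on whether $r_q(i) \leq r_p(i)$ (then $q_i > \frac{\alpha}{m}r_p(i) \geq \frac{\alpha}{m}r_q(i)$ directly) or $r_q(i) > r_p(i)$ (then $q_i$ has displaced some $q_j$ with $j\notin h(p)$, and $q_i \geq q_j > \frac{\alpha}{m} r_p(j) = \frac{\alpha}{m} r_q(i)$). A counting argument in your style also works: if $q_{(l)} \leq \frac{l\alpha}{m}$ for some $l > |h(p)|$, then at least $l - |h(p)|$ non-rejected indices $j$ satisfy $q_j \leq \frac{l\alpha}{m}$, forcing $r_p(j) < l$ for each; but there are only $l - 1 - |h(p)|$ available rank slots in $\{|h(p)|+1,\ldots,l-1\}$, a contradiction.
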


\begin{proof}
As $h$ is invariant to permutations, we may assume $p_1 \leq \cdots \leq p_m$.

\ref{theorem_item_monotonicity}.
Let $p \in [0,1]^m$ and $i \in \{ 1,\ldots,m \}$.
It suffices to show that $h(p) \supseteq h(q)$ for any
$q \in [0,1]^m$ given by
$q_j=p_j$ $\forall j \neq i$ and $q_i > p_i$.

Let $k := |h(p)|$ be the largest index which is
rejected when the Benjamini-Hochberg procedure is applied to $p$.
We need to show that $j \notin h(q)$ $\forall j \geq k+1$.

Case 1: $r_{q}(i) \leq k$. This implies $r_{q}(j)=j$ $\forall j \geq k+1$ and hence
$q_j = p_j > \frac{\alpha j}{m} = \frac{\alpha r_{q}(j)}{m}.$
Therefore, $j \notin h(q)$ $\forall j \geq k+1$.

Case 2: $r_{q}(i) \geq k+1$. Let $j \geq k+1$, $j \neq i$.
Then the rank of the $j$th p-value can only drop by one when $p_i$
is replaced by $q_i$, i.e.\ $r_q(j) \in \{ j-1,j \}$.
Thus $q_j = p_j > \frac{\alpha j}{m} \geq \frac{\alpha r_q(j)}{m}$.
Furthermore, as $r_{q}(i) \geq k+1$, $q_i$
takes the position of the former $p_{r_{q}(i)}$ in the ordered sequence of
values from $q$, i.e.\ $q_i \geq p_{r_{q}(i)}$.
Hence, $r_{q}(i) \notin h(p)$
because of $r_{q}(i) \geq k+1$ and thus
$q_i \geq p_{r_{q}(i)} > \frac{\alpha r_{q}(i)}{m}$.
Therefore, $\{ k+1, \ldots, m \} \cup \{ i \} \notin h(q)$.
This proves statement \ref{theorem_item_monotonicity}.

\ref{theorem_item_invariance_rejection_area}.
All $i \notin h(p)$ satisfy $p_i > \frac{|h(p)| \alpha}{m}$, whereas by assumption,
$q_i \leq \frac{|h(p)| \alpha}{m}$ $\forall i \in h(p)$.
Hence, using $q_i = p_i$ $\forall i \notin h(p)$, it follows that
$r_q(i)=r_p(i)$ $\forall i \notin h(p)$. Thus,
$q_i = p_i > \frac{r_p(i) \alpha}{m} = \frac{r_q(i) \alpha}{m}$
for all $i \notin h(p)$. Hence $h(p)^c \subseteq h(q)^c$.

Conversely, define $\tilde{q} := \max \{ q_i: i \in h(p) \}$. As
$\tilde{q} \leq \frac{|h(p)| \alpha}{m} < q_i$
for all $i \notin h(p)$ and as there are exactly $|h(p)|$ values
$q_i \leq \tilde{q}$, the rank of $\tilde{q}$ in $q$ is precisely $|h(p)|$.
As $q_i \leq \tilde{q} \leq \frac{|h(p)| \alpha}{m}$
$\forall i \in h(p)$,
all $\{ q_i \}_{i \in h(p)}$ are rejected and $h(p) \subseteq h(q)$.
This proves statement \ref{theorem_item_invariance_rejection_area}.

\ref{theorem_item_invariance_nonrejection_area}.
As $q_i = p_i$ for all $i \in h(p)$, have $h(p) \subseteq h(q)$.

Let $i \notin h(p)$. If $r_q(i) \leq r_p(i)$, then
$q_i > \frac{\alpha}{m}r_p(i) \geq \frac{\alpha}{m}r_q(i)$.
If $r_q(i) > r_p(i)$, $q_i$ replaces a
$q_j > \frac{\alpha}{m}r_p(j)$ at rank $r_p(j)$
in the sorted sequence of $q$, hence $r_q(i)=r_p(j)$ and
$q_i \geq q_j > \frac{\alpha}{m}r_p(j) = \frac{\alpha}{m}r_q(i)$.
Thus $q_i > \frac{\alpha}{m}r_q(i)$ $\forall i \notin h(p)$,
which implies $h(p)^c \subseteq h(q)^c$.
This proves statement \ref{theorem_item_invariance_nonrejection_area}.
\end{proof}

The second statement
shows that all the p-values in the set of rejections can be
increased up to a certain bound without affecting the result of $h$.
The third statement states
that $h$ stays invariant if
the p-values in the non-rejection area
are replaced by arbitrary values above the Benjamini-Hochberg line
(see Example \ref{example_visualization}).

\begin{corollary}
\label{corollary_benjamini_hochberg}
$h$ satisfies Condition \ref{condition_invariance}.
\end{corollary}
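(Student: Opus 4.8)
The plan is to derive Corollary \ref{corollary_benjamini_hochberg} directly from Theorem \ref{theorem_benjamini_hochberg}, since the three items of that theorem were advertised as being ``slightly stronger'' than Condition \ref{condition_invariance}. Condition \ref{condition_invariance} has two parts: part 1 (monotonicity of $h$) and part 2 (invariance of $h(p)$ under simultaneously lowering the p-values inside $h(p)$ and raising those outside $h(p)$). Part 1 is immediate: it is literally item \ref{theorem_item_monotonicity} of Theorem \ref{theorem_benjamini_hochberg}. So the whole content of the proof is to deduce part 2 from items \ref{theorem_item_invariance_rejection_area} and \ref{theorem_item_invariance_nonrejection_area}.

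First I would set up the situation of Condition \ref{condition_invariance}(2): fix $p,q\in[0,1]^m$ with $q_i\le p_i$ for all $i\in h(p)$ and $q_i\ge p_i$ for all $i\notin h(p)$, and the goal is $h(p)=h(q)$. The natural strategy is to interpolate through an intermediate vector $\tilde p$ that agrees with $p$ off $h(p)$ and with $q$ on $h(p)$ — i.e.\ first push down only the rejected coordinates, then push up only the non-rejected coordinates — and apply item \ref{theorem_item_invariance_rejection_area} to the first move and item \ref{theorem_item_invariance_nonrejection_area} to the second. The key algebraic point is that lowering the p-values in $h(p)$ keeps them below the relevant threshold: any $i\in h(p)$ satisfies $p_i \le \frac{|h(p)|\alpha}{m}$ (this is the characterisation of the Benjamini-Hochberg rejection set, established in the proof of Theorem \ref{theorem_benjamini_hochberg}), hence $q_i \le p_i \le \frac{|h(p)|\alpha}{m}$, which is exactly the hypothesis of item \ref{theorem_item_invariance_rejection_area} with the roles $p\rightsquigarrow \tilde p$, $q\rightsquigarrow$ something; so $h(\tilde p)=h(p)$ where $\tilde p_i=q_i$ on $h(p)$ and $\tilde p_i=p_i$ off $h(p)$. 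Then, since $h(\tilde p)=h(p)$ and $\tilde p$ agrees with $p$ (hence has the same ranks) off $h(p)$, the non-rejected coordinates of $\tilde p$ satisfy $\tilde p_i > \frac{\alpha}{m} r_{\tilde p}(i)$, and raising them to $q_i \ge p_i = \tilde p_i$ keeps this strict inequality; this is precisely the hypothesis of item \ref{theorem_item_invariance_nonrejection_area} applied to $\tilde p$ and $q$, giving $h(q)=h(\tilde p)=h(p)$.

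The one point requiring a little care is bookkeeping with ranks when both moves are combined: item \ref{theorem_item_invariance_nonrejection_area} is stated in terms of $r_p(i)$ for $i\notin h(p)$, so I need to confirm that the intermediate vector $\tilde p$ has the same ranks on $h(p)^c$ as $p$ does. This holds because the values $\tilde p_i$ for $i\in h(p)$ remain $\le \frac{|h(p)|\alpha}{m}$, and — as in the proof of item \ref{theorem_item_invariance_rejection_area} — every $i\notin h(p)$ has $p_i > \frac{|h(p)|\alpha}{m}$, so no coordinate crosses anyone else and $r_{\tilde p}(i)=r_p(i)$ for $i\notin h(p)$. I do not expect any genuine obstacle here — the corollary is essentially a repackaging of the theorem — so the ``hard part'' is merely to present the two-step interpolation cleanly and cite the correct items; nothing beyond the already-proved Theorem \ref{theorem_benjamini_hochberg} is needed.
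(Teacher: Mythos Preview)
Your proposal is correct and follows essentially the same approach as the paper: both reduce part~2 of Condition~\ref{condition_invariance} to a two-step interpolation (first lower the coordinates in $h(p)$, then raise those outside $h(p)$) and invoke items~\ref{theorem_item_invariance_rejection_area} and~\ref{theorem_item_invariance_nonrejection_area} of Theorem~\ref{theorem_benjamini_hochberg} for the two steps. The paper phrases the reduction as ``it suffices to show'' the two one-sided statements without naming the intermediate vector~$\tilde p$, whereas you spell out~$\tilde p$ and the rank bookkeeping explicitly; the content is the same.
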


\begin{proof}
Statement $1$ of Condition \ref{condition_invariance}
is satisfied as $h$ is monotonic by Theorem
\ref{theorem_benjamini_hochberg}.

To prove that the Benjamini-Hochberg procedure $h$ also satisfies
the second statement of Condition \ref{condition_invariance},
it suffices to show that for $p,q \in [0,1]^m$, both
$q_i \leq p_i$ $\forall i \in h(p)$ and $q_i = p_i$ $\forall i \notin h(p)$
as well as
$q_i = p_i$ $\forall i \in h(p)$ and $q_i \geq p_i$ $\forall i \notin h(p)$
imply $h(p) = h(q)$.

Indeed, let $p,q \in [0,1]^m$ be such that
$q_i \leq p_i$ $\forall i \in h(p)$ and $q_i = p_i$ $\forall i \notin h(p)$.
We have $p_i \leq \frac{|h(p)|\alpha}{m}$ $\forall i \in h(p)$
by definition of $h$, thus $q_i \leq p_i \leq \frac{|h(p)|\alpha}{m}$
$\forall i \in h(p)$ and
$h(p) = h(q)$ by statement \ref{theorem_item_invariance_rejection_area}
of Theorem \ref{theorem_benjamini_hochberg}.

Similarly, let $p,q \in [0,1]^m$ be such that
$q_i = p_i$ $\forall i \in h(p)$ and $q_i \geq p_i$ $\forall i \notin h(p)$.
Using that $p_i > \frac{\alpha}{m}r_p(i)$ $\forall i \notin h(p)$ it
immediately follows that $q_i \geq p_i>\frac{\alpha}{m}r_p(i)$ $\forall i \notin h(p)$
and thus $h(p) = h(q)$
by statement \ref{theorem_item_invariance_nonrejection_area}
of Theorem \ref{theorem_benjamini_hochberg}.
\end{proof}

The next lemma states that $h$ is locally constant
for almost all arguments:

\begin{lemma}
\label{lemma_invariance_delta}
If $p^\ast \in [0,1]^m$ with $p^\ast_{(i)} \neq i \alpha/m$,
$i \in \{ 1,\ldots,m \}$, then there exists $\delta>0$ such that
$p \in [0,1]^m$ and
$\left\Vert p-p^\ast \right\Vert < \delta$
imply $h(p^\ast)=h(p)$.
\end{lemma}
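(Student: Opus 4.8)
The plan is to exploit that the hypothesis $p^\ast_{(i)}\neq i\alpha/m$ for all $i$ forces every order statistic of $p^\ast$ to lie a positive distance $g_1:=\min_{1\le i\le m}|p^\ast_{(i)}-i\alpha/m|>0$ away from the Benjamini-Hochberg line, while distinct coordinates of $p^\ast$ are separated by a gap $g:=\min\{|p^\ast_i-p^\ast_j|:p^\ast_i\neq p^\ast_j\}>0$ (with $g=\infty$ if all coordinates coincide). The claim will follow by taking $\delta:=\tfrac13\min(g_1,g)$ and showing that a perturbation of $p^\ast$ of Euclidean norm below $\delta$ can neither push an order statistic across its line nor swap two coordinates with distinct $p^\ast$-values. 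Since $h$ is permutation invariant (Theorem~\ref{theorem_benjamini_hochberg}), I would relabel coordinates so that $p^\ast_1\le\cdots\le p^\ast_m$ from the outset, and note that $\|p-p^\ast\|<\delta$ gives $|p_i-p^\ast_i|<\delta$ in each coordinate.

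The two elementary facts to establish are: \textbf{(a)} if $p^\ast_i<p^\ast_j$ then $p^\ast_j-p^\ast_i\ge g>2\delta$, hence $p_i<p_j$, so the order of any two coordinates with distinct $p^\ast$-values is preserved; and \textbf{(b)} within each maximal block of tied $p^\ast$-coordinates the corresponding $p$-coordinates stay within $\delta$ of the common value and, by (a), are separated from all coordinates in earlier and later blocks, so they occupy exactly the same block of ranks in the sorted order of $p$. Consequently $|p_{(i)}-p^\ast_{(i)}|<\delta$ for every $i$. Combining this with $\delta<g_1\le|p^\ast_{(i)}-i\alpha/m|$, the difference $p_{(i)}-i\alpha/m$ has the same sign as $p^\ast_{(i)}-i\alpha/m$ for each $i$, whence $\{i:p_{(i)}\le i\alpha/m\}=\{i:p^\ast_{(i)}\le i\alpha/m\}$; taking the largest element of this common set shows the Benjamini-Hochberg cut-off is the same index $k^\ast$, so in particular $|h(p)|=|h(p^\ast)|=k^\ast$.

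It then remains to match the actual index sets. If $k^\ast=0$ or $k^\ast=m$ this is immediate from the previous step. Otherwise the defining maximality of the cut-off, together with $p^\ast_{(k^\ast)}\neq k^\ast\alpha/m$ and $p^\ast_{(k^\ast+1)}\neq(k^\ast+1)\alpha/m$, yields the strict chain $p^\ast_{(k^\ast)}<k^\ast\alpha/m<(k^\ast+1)\alpha/m<p^\ast_{(k^\ast+1)}$, so every coordinate indexed by $h(p^\ast)$ is strictly smaller than every coordinate not in $h(p^\ast)$. By (a) the same strict separation is inherited by $p$, and since $|h(p)|=k^\ast$ the $k^\ast$ smallest coordinates of $p$ are precisely those indexed by $h(p^\ast)$; hence $h(p)=h(p^\ast)$, as required.

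The main obstacle I anticipate is the careful treatment of ties in $p^\ast$: establishing (a) and (b) rigorously, in particular arguing that a block of tied coordinates continues to occupy the same set of ranks after perturbation so that $|p_{(i)}-p^\ast_{(i)}|<\delta$ holds even though coordinates inside a tie block may be reordered. (Alternatively, one could sandwich $p$ between $p\wedge p^\ast$ and $p\vee p^\ast$ and invoke monotonicity from Theorem~\ref{theorem_benjamini_hochberg}, but this still requires essentially the same order-statistic bookkeeping.) Once the perturbation is small relative to $\min(g_1,g)$, everything else is routine.
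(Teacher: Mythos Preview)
Your proof is correct and follows essentially the same approach as the paper: both arguments choose $\delta$ to be (a constant times) the minimum of the smallest gap between distinct coordinates of $p^\ast$ and the smallest distance of an order statistic to the Benjamini--Hochberg line, then show that a $\delta$-perturbation preserves the block structure of the ordering and keeps each order statistic on the same side of the line. The paper is terser---it relabels so that both $p^\ast$ and $p$ are simultaneously sorted and concludes directly---whereas you spell out more explicitly why the rejected index sets (not just the cut-off rank) coincide, which is a welcome clarification but not a different idea.
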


\begin{proof}
The function $h$ stays invariant if all p-values do not
change their rank outside of a tie and if
no p-value crosses the Benjamini-Hochberg threshold line.

As $h$ is invariant to permutations,
we may assume $p^\ast_1 \leq \cdots \leq p^\ast_m$.

Let
$\delta := \min \left(
\left\{ \frac{p^\ast_i-p^\ast_{i-1}}{2}: i=2,\ldots,m
\text{ with } p^\ast_{i-1}<p^\ast_i \right\} \cup
\left\{ |p^\ast_i - \frac{i\alpha}{m}|: i=1,\ldots,m \right\} \right)$.

Let $p \in [0,1]^m$ with $\left\Vert p-p^\ast \right\Vert < \delta$.
Then $p^\ast_{i-1}<p^\ast_i$ implies
$p_{i-1} < p^\ast_{i-1}+\delta \leq p^\ast_i-\delta < p_i$.
Thus, by possibly permuting indices corresponding to tied
values in $p$, we may assume
$p^\ast_1 \leq \cdots \leq p^\ast_m$
and $p_1 \leq \cdots \leq p_m$. The ranks of the p-values
in $p^\ast$ and $p$ are therefore the same.

Furthermore, $|p_i-p^\ast_i| < \delta \leq |p^\ast_i-i\alpha/m|$
for all $i \in \{ 1,\ldots,m \}$, implying that
$p^\ast_i$ and $p_i$ lie on the same side of the
Benjamini-Hochberg line. Hence, $h(p^\ast)=h(p)$.
\end{proof}

Lemma \ref{lemma_invariance_delta} thus shows that
the condition on $p^\ast$ in Theorem \ref{theorem_convergence}
is satisfied for all the p-values except for those
lying exactly on the Benjamini-Hochberg line.

\subsection{Properties of the Bonferroni correction}
\label{appendix_bonferroni_correction}
The \cite{Bonferroni1936} correction
controls the Familywise Error Rate, defined
by $\text{FWER}:=\PP(V \geq 1)$, where $V$ is the number of hypotheses from
the null which have been rejected (false positives).
The method tests all $m$ hypotheses $H_{01},\ldots,H_{0m}$
at threshold $\alpha/m$ to guarantee $\text{FWER} \leq \alpha$.
The Bonferroni correction $h_B$ returning the set of rejected indices
can be stated as
$$h_B(p) = \left\{ i \in \{ 1,\ldots,m \}: p_i \leq \alpha/m \right\}.$$
Similarly to Theorem \ref{theorem_benjamini_hochberg},
the following theorem
states two key properties of $h_B$ which are slightly
stronger than the corresponding statements of Condition
\ref{condition_invariance}.

\begin{theorem}
\label{theorem_Bonferroni}
\begin{enumerate}
  \item $h_B$ is monotonic.
  \label{theorem_Bonferroni_item_monotonicity}
  \item Let $p,q \in [0,1]^m$. If $q_i \leq \frac{\alpha}{m}$ $\forall i \in h_B(p)$
and $q_i > \frac{\alpha}{m}$ $\forall i \notin h_B(p)$, then $h_B(p)=h_B(q)$.
  \label{theorem_Bonferroni_item_invariance_area}
\end{enumerate}
\end{theorem}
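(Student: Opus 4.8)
The plan is to read everything off the explicit coordinatewise description $h_B(p) = \{ i \in \{1,\ldots,m\} : p_i \leq \alpha/m \}$. This decouples the $m$ coordinates completely, so — unlike the Benjamini--Hochberg case — no ranking or ordering arguments are needed, and both statements reduce to one-line comparisons of single coordinates against the fixed threshold $\alpha/m$.

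For statement \ref{theorem_Bonferroni_item_monotonicity}, I would fix $p \leq q$ in $[0,1]^m$ and take any $i \in h_B(q)$. Then $q_i \leq \alpha/m$, and since $p_i \leq q_i$ we get $p_i \leq \alpha/m$, i.e.\ $i \in h_B(p)$. Hence $h_B(q) \subseteq h_B(p)$, which is precisely the monotonicity requirement $h_B(p) \supseteq h_B(q)$. The only thing to be careful about is the direction of the inclusion: lowering p-values enlarges the rejection set.

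For statement \ref{theorem_Bonferroni_item_invariance_area}, I would prove the two inclusions separately. If $i \in h_B(p)$, the first hypothesis gives $q_i \leq \alpha/m$, so $i \in h_B(q)$; hence $h_B(p) \subseteq h_B(q)$. If $i \notin h_B(p)$, the second hypothesis gives $q_i > \alpha/m$, so $i \notin h_B(q)$; hence $h_B(p)^c \subseteq h_B(q)^c$, i.e.\ $h_B(q) \subseteq h_B(p)$. Combining the two inclusions yields $h_B(p) = h_B(q)$.

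There is no genuine obstacle here — the statement is immediate from the definition — so the only point requiring attention is cosmetic: matching the inequalities to the half-open boundary conventions in the hypotheses (non-strict $\leq$ on the rejection side, strict $>$ on the non-rejection side) so that the two inclusions together cover all indices $i$. Once these two properties are in hand, Corollary \ref{corollary_Bonferroni} (that $h_B$ satisfies Condition \ref{condition_invariance}) will follow exactly as in the Benjamini--Hochberg case: the hypotheses of Condition \ref{condition_invariance}(2), namely $q_i \leq p_i \leq \alpha/m$ for $i \in h_B(p)$ and $q_i \geq p_i > \alpha/m$ for $i \notin h_B(p)$, are a special case of the hypotheses of statement \ref{theorem_Bonferroni_item_invariance_area}.
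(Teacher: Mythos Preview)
Your proposal is correct, and for statement~\ref{theorem_Bonferroni_item_invariance_area} it matches the paper's proof essentially verbatim: both show $h_B(p) \subseteq h_B(q)$ and $h_B(p)^c \subseteq h_B(q)^c$ directly from the threshold condition.

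For statement~\ref{theorem_Bonferroni_item_monotonicity} your argument is actually cleaner than the paper's. The paper reduces to the case where $q$ differs from $p$ in a single coordinate and then splits into cases according to whether $p_i \leq \alpha/m$ or $p_i > \alpha/m$; you instead prove the inclusion $h_B(q) \subseteq h_B(p)$ in one line from $p_i \leq q_i \leq \alpha/m$. Both are correct, but your version exploits the coordinatewise structure more fully and avoids the unnecessary case analysis. The paper's one-coordinate-at-a-time reduction is presumably carried over from the Benjamini--Hochberg proof (Theorem~\ref{theorem_benjamini_hochberg}), where ranks do interact across coordinates and such a reduction is genuinely helpful; for $h_B$ it buys nothing.
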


\begin{proof}
\ref{theorem_Bonferroni_item_monotonicity}.
Let $p \in [0,1]^m$ and $i \in \{ 1,\ldots,m \}$.
It suffices to show that $h_B(p) \supseteq h_B(q)$ for any
$q \in [0,1]^m$ given by
$q_j=p_j$ $\forall j \neq i$ and $q_i > p_i$.

If $p_i > \alpha/m$, then $i$ is non-rejected. Increasing $p_i$ even
further will thus not change the result of $h_B$ as $q_i > p_i > \alpha/m$.
Therefore, $h_B(q)=h_B(p)$.

If $p_i \leq \alpha/m$ and $q_i>p_i$, the result of $h_B(q)$ depends on
whether $q_i$ is greater than $\alpha/m$ or not. In the first case, $i$
is non-rejected in $h_B(q)$, thus $h_B(q) \subset h_B(p)$. In the second
case, $q_i \leq \alpha/m$ is still rejected and thus $h_B(q)=h_B(p)$.
This proves statement \ref{theorem_Bonferroni_item_monotonicity}.

\ref{theorem_Bonferroni_item_invariance_area}.
All $q_i \leq \alpha/m$, $i \in h_B(p)$,
are rejected, thus $h_B(p) \subseteq h_B(q)$.
Similarly, all $q_i > \alpha/m$, $i \notin h_B(p)$,
are non-rejected, thus $h_B(p)^c \subseteq h_B(q)^c$.
This proves statement \ref{theorem_Bonferroni_item_invariance_area}.
\end{proof}

The second statement of Theorem \ref{theorem_Bonferroni}
shows that the result of $h_B$ is not affected
if p-values in the rejection (non-rejection) area are
replaced by arbitrary values below (above)
the constant testing threshold $\alpha/m$.

\begin{corollary}
\label{corollary_Bonferroni}
$h_B$ satisfies Condition \ref{condition_invariance}.
\end{corollary}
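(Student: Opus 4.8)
The plan is to verify the two parts of Condition~\ref{condition_invariance} for $h_B$ directly, since both follow almost immediately from Theorem~\ref{theorem_Bonferroni}. The first part, monotonicity of $h_B$, is already statement~\ref{theorem_Bonferroni_item_monotonicity} of Theorem~\ref{theorem_Bonferroni}, so nothing further is needed there.

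For the second part, I would take $p,q \in [0,1]^m$ with $q_i \leq p_i$ for all $i \in h_B(p)$ and $q_i \geq p_i$ for all $i \notin h_B(p)$, and show $h_B(p) = h_B(q)$. The key observation is that the hypothesis of statement~\ref{theorem_Bonferroni_item_invariance_area} of Theorem~\ref{theorem_Bonferroni} is implied by this setup: if $i \in h_B(p)$ then by definition of $h_B$ we have $p_i \leq \alpha/m$, hence $q_i \leq p_i \leq \alpha/m$; and if $i \notin h_B(p)$ then $p_i > \alpha/m$, hence $q_i \geq p_i > \alpha/m$. Thus both premises of statement~\ref{theorem_Bonferroni_item_invariance_area} hold, and that statement directly yields $h_B(p) = h_B(q)$.

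There is no real obstacle here; the proof is a short unpacking of definitions plus an appeal to the already-established Theorem~\ref{theorem_Bonferroni}. The only thing to be careful about is matching the direction of the inequalities in Condition~\ref{condition_invariance} (decrease on rejected indices, increase on non-rejected indices) with the threshold characterization of $h_B$, which is precisely what makes the reduction to statement~\ref{theorem_Bonferroni_item_invariance_area} clean. I would write this as a two-paragraph proof: one line for monotonicity, and a few lines for the invariance part along the lines sketched above, mirroring the structure of the proof of Corollary~\ref{corollary_benjamini_hochberg}.
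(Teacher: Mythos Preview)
Your proposal is correct and essentially identical to the paper's own proof: both cite Theorem~\ref{theorem_Bonferroni}\ref{theorem_Bonferroni_item_monotonicity} for monotonicity and then, for part~2 of Condition~\ref{condition_invariance}, use the definition of $h_B$ to convert the inequalities $q_i \leq p_i$ (resp.\ $q_i \geq p_i$) into $q_i \leq \alpha/m$ (resp.\ $q_i > \alpha/m$) before invoking Theorem~\ref{theorem_Bonferroni}\ref{theorem_Bonferroni_item_invariance_area}.
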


\begin{proof}
Statement $1$ of Condition \ref{condition_invariance}
is satisfied as $h$ is monotonic by Theorem \ref{theorem_Bonferroni}.

Statement \ref{theorem_Bonferroni_item_invariance_area}
of Theorem \ref{theorem_Bonferroni}
shows that the Bonferroni
correction also satisfies the second statement of Condition
\ref{condition_invariance}.
Indeed, let $p,q \in [0,1]^m$
be given such that $q_i \leq p_i$ $\forall i \in h(p)$
and $q_i \geq p_i$ $\forall i \notin h(p)$.
By definition of $h_B$, $p_i \leq \frac{\alpha}{m}$
$\forall i \in h_B(p)$ and $p_i > \frac{\alpha}{m}$
$\forall i \notin h_B(p)$.
In particular, $q_i \leq p_i \leq \frac{\alpha}{m}$
$\forall i \in h_B(p)$ and $q_i \geq p_i > \frac{\alpha}{m}$
$\forall i \notin h_B(p)$, thus $h_B(p)=h_B(q)$
by statement \ref{theorem_Bonferroni_item_invariance_area}
of Theorem \ref{theorem_Bonferroni}.
\end{proof}

Similarly to Lemma \ref{lemma_invariance_delta}, the Bonferroni correction is
locally constant for almost all values:

\begin{lemma}
\label{lemma_Bonferroni_invariance_delta}
For all $p^\ast \in \left( [0,\alpha/m) \cup (\alpha/m,1] \right)^m$
there exists $\delta>0$ such that
$p \in [0,1]^m$ and
$\left\Vert p-p^\ast \right\Vert < \delta$
imply $h_B(p^\ast)=h_B(p)$.
\end{lemma}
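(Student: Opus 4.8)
The plan is to exploit the fact that $h_B$ depends on each coordinate $p_i$ only through the single comparison ``$p_i \leq \alpha/m$'' versus ``$p_i > \alpha/m$'', so $h_B$ is locally constant away from the threshold hyperplanes $\{p : p_i = \alpha/m\}$. Since by hypothesis no coordinate of $p^\ast$ equals $\alpha/m$, each $p_i^\ast$ sits at a strictly positive distance from the threshold, and a small enough ball around $p^\ast$ cannot cross any of these hyperplanes.

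Concretely, I would set $\delta := \min_{i=1,\ldots,m} |p_i^\ast - \alpha/m|$, which is strictly positive precisely because $p^\ast \in \left([0,\alpha/m) \cup (\alpha/m,1]\right)^m$. Then for any $p \in [0,1]^m$ with $\left\Vert p - p^\ast \right\Vert < \delta$ and any $i$, one has $|p_i - p_i^\ast| \leq \left\Vert p - p^\ast \right\Vert < \delta \leq |p_i^\ast - \alpha/m|$, which forces $p_i$ and $p_i^\ast$ onto the same side of $\alpha/m$: if $p_i^\ast < \alpha/m$ then $p_i < \alpha/m$, and if $p_i^\ast > \alpha/m$ then $p_i > \alpha/m$. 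Hence $i \in h_B(p) \iff i \in h_B(p^\ast)$ for every $i$, i.e. $h_B(p) = h_B(p^\ast)$. Equivalently, with this $\delta$ the vector $p$ satisfies $p_i \leq \alpha/m$ for all $i \in h_B(p^\ast)$ and $p_i > \alpha/m$ for all $i \notin h_B(p^\ast)$, so the conclusion is also immediate from statement \ref{theorem_Bonferroni_item_invariance_area} of Theorem \ref{theorem_Bonferroni}.

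There is essentially no obstacle here: the only point needing a (trivial) remark is that the minimum defining $\delta$ is strictly positive, which is exactly the hypothesis $p_i^\ast \neq \alpha/m$; everything else reduces to the coordinatewise bound $|p_i - p_i^\ast| \leq \left\Vert p - p^\ast \right\Vert$. This is the Bonferroni analogue of Lemma \ref{lemma_invariance_delta}, but simpler, since the constant threshold $\alpha/m$ replaces the Benjamini-Hochberg line and there are no ranks or ties to keep track of.
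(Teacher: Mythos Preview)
Your proposal is correct and follows essentially the same approach as the paper: both take $\delta := \min_i |p_i^\ast - \alpha/m|$, use the coordinatewise bound $|p_i - p_i^\ast| \leq \lVert p - p^\ast \rVert$ to conclude that each $p_i$ stays on the same side of $\alpha/m$ as $p_i^\ast$, and deduce $h_B(p) = h_B(p^\ast)$. Your write-up is slightly more detailed (making the positivity of $\delta$ and the coordinatewise inequality explicit, and noting the alternative via Theorem~\ref{theorem_Bonferroni}), but the argument is the same.
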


\begin{proof}
Let $\delta:=\min_{i \in \{ 1,\ldots,m \}}|p^\ast_i - \alpha/m|$.
Let $p \in [0,1]^m$ with $\left\Vert p-p^\ast \right\Vert < \delta$.
For all $i \in \{ 1,\ldots,m \}$,
this implies that $p_i$ and $p^\ast_i$ lie on the same side
of the threshold $\alpha/m$.
Therefore, $h_B(p^\ast)=h_B(p)$.
\end{proof}

Lemma \ref{lemma_Bonferroni_invariance_delta} thus shows that
the condition on $p^\ast$ in Theorem \ref{theorem_convergence}
is satisfied for all the p-values except for those
lying exactly on the threshold $\alpha/m$.

\section{Proofs of Section 3}
\label{appendix_proofs}

\begin{proof}[Proof of Lemma \ref{lemma_monotonicity_sets}]
1. By construction of Algorithm \ref{algorithm_basic},
all the confidence intervals $I_i^n$ are nested.
Therefore, the sequence $(\max I_i^n)_{i=1,\ldots,m}$ is decreasing in $n$.
Thus,
$$\underline{A}_n = h((\max I_i^n)_{i=1,\ldots,m})
\subseteq h((\max I_i^{n+1})_{i=1,\ldots,m}) = \underline{A}_{n+1}$$
by monotonicity of $h$.
Similarly, $\overline{A}_n \supseteq \overline{A}_{n+1}$
as $(\min I_i^n)_{i=1,\ldots,m}$ is increasing in $n$.

2. Given that $p_i^\ast \in I_i^n$ $\forall i,n$,
the ideal p-values satisfy $p_i^\ast \leq \max I_i^n$ $\forall i,n$.
When applied to the vectors
$(p_i^\ast)_{i=1,\ldots,m} \leq (\max I_i^n)_{i=1,\ldots,m}$,
the monotonicity of $h$ yields
$$\A = h((p_i^\ast)_{i=1,\ldots,m}) \supseteq h((\max I_i^n)_{i=1,\ldots,m}) = \underline{A}_n.$$
Similarly, $\A \subseteq \overline{A}_n$
as $(p_i^\ast)_{i=1,\ldots,m} \geq (\min I_i^n)_{i=1,\ldots,m}$.
\end{proof}

\begin{proof}[Proof of Theorem \ref{theorem_convergence}]
Let $B_n = \overline{A}_n \setminus \underline{A}_n$.
Suppose $\exists i \in \lim \sup_{n \rightarrow \infty} B_n$.
By construction of Algorithm \ref{algorithm_basic},
$n \rightarrow \infty$ implies $k_i \rightarrow \infty$.
Condition \ref{condition_intervals} thus yields
$|I_i^n| = |f(\cdot,k_i,\cdot)|\rightarrow 0$ as $n \rightarrow \infty$.
Let $\delta$ be as given in the theorem.
As $B_n \subseteq \{1,\ldots,m \}$ is finite $\forall n \in \N$, there exists
$n_0 \in \N$ such that $|I_i^n|^2 < \delta^2/m$ for
$n \geq n_0$ and all $i \in \lim \sup_{n \rightarrow \infty} B_n$.

We show that for all $n \geq n_0$,
$$h( (\min I_i^n)_{i \in \{ 1,\ldots,m \}} ) = h(p^\ast) =
h( (\max I_i^n)_{i \in \{ 1,\ldots,m \}} ).$$
To do this, we show $h(p^{(j)})=h(p^{(j+1)})$, $j \in \{ 1,\ldots,6 \}$, where
\begin{center}
\begin{tabular}{lll}
$p^{(1)} := (\min I_i^n)_{i \in \{ 1,\ldots,m \}},$&
$p^{(4)} := p^\ast,$\\
$p^{(2)} := \begin{cases} \begin{matrix} \min I_i^n & i \in \overline{A}_n,\\ p_i^\ast & i \notin \overline{A}_n, \end{matrix} \end{cases}$&
$p^{(5)} := \begin{cases} \begin{matrix} \max I_i^n & i \in B_n,\\ p_i^\ast & i \notin B_n, \end{matrix} \end{cases}$\\
$p^{(3)} := \begin{cases} \begin{matrix} \min I_i^n & i \in B_n,\\ p_i^\ast & i \notin B_n, \end{matrix} \end{cases}$&
$p^{(6)} := \begin{cases} \begin{matrix} \max I_i^n & i \in \overline{A}_n,\\ p_i^\ast & i \notin \overline{A}_n, \end{matrix} \end{cases}$
\end{tabular}
\end{center}
and $p^{(7)} := (\max I_i^n)_{i \in \{ 1,\ldots,m \}}$.
The following holds true on the event $\{ p_i^\ast \in I^n_i ~\forall i,n \}$.

(1) By definition, $\overline{A}_n = h(p^{(1)})$.
As $p^{(2)}_j = p_j^\ast \geq \min I_j^n = p^{(1)}_j$ $\forall j \notin \overline{A}_n$
and $p^{(2)}_j = p^{(1)}_j$ $\forall j \in \overline{A}_n$,
the second statement of Condition \ref{condition_invariance}
yields $\overline{A}_n = h(p^{(1)}) = h(p^{(2)})$.

(2) As $(\max I_i^n)_{i \in \{ 1,\ldots,m \}} \geq p^{(3)}$ and as
$h$ is monotonic, $\underline{A}_n \subseteq h(p^{(3)})$.
As $p^{(2)}_j = \min I_j^n \leq p_j^\ast = p^{(3)}_j$ $\forall j \in \underline{A}_n$
and $p^{(2)}_j = p^{(3)}_j$ $\forall j \notin \underline{A}_n$,
the second statement of Condition \ref{condition_invariance}
yields $h(p^{(2)}) = h(p^{(3)})$.

(3) For all $n \geq n_0$ and all $i \in \lim \sup_{n \rightarrow \infty} B_n$,
$|I_i^n|^2 < \delta^2/m$ implies
$\| p^{(3)} - p^\ast \| < \delta$ and hence
$h(p^{(3)}) = h(p^{(4)}) = h(p^\ast)$ $\forall n \geq n_0$
by definition of $\delta$ in the theorem.

Arguing similarly to (1), (2), (3) we can show
$h(p^{(4)}) = h(p^{(5)})$, $h(p^{(5)}) = h(p^{(6)})$
and $h(p^{(6)}) = h(p^{(7)}) = \underline{A}_n$.
\end{proof}

\begin{proof}[Proof of Corollary \ref{corollary_convergence}]
By Theorem \ref{theorem_convergence}
we have $\underline{A}_n \rightarrow \A$, $\overline{A}_n \rightarrow \A$
as $n \rightarrow \infty$
conditional on $\{ p_i^\ast \in I^n_i ~\forall i,n \}$.
Under Condition \ref{condition_probability},
this event occurs with probability
$\PP(p_i^\ast \in I_i^n ~ \forall i,n) \geq 1-\epsilon$, hence
$\PP(\underline{A}_n \rightarrow \A, \overline{A}_n \rightarrow \A) \geq 1-\epsilon$.
\end{proof}


\end{document}